\newcommand {\ignore} [1] {}
\newcommand{\ssn}  {\sf Star-SN}
\newcommand{\sna}  {\sf Star-NA}
\newcommand{\cna}  {\sf Centered-NA}
\newtheorem{observation}[corollary]{Observation}
\begin{document}

\title{Approximating Source Location and \\ Star Survivable Network Problems}

\author{Guy Kortsarz\inst{1} \and Zeev Nutov\inst{2}}
\institute{Rutgers University, Camden    \email{guyk@camden.rutgers.edu} \and 
           The Open University of Israel \email{nutov@openu.ac.il}}

\maketitle


\begin{abstract}
In {\sf Source Location} ({\sf SL}) problems 
the goal is to select a mini\-mum cost source set $S \subseteq V$
such that the connectivity (or flow) $\psi(S,v)$ from $S$ to any node $v$ is at least the demand $d_v$ of $v$.
In many {\sf SL} problems $\psi(S,v)=d_v$ if $v \in S$, namely,
the demand of nodes selected to $S$ is completely satisfied.
In a node-connectivity variant suggested recently by Fukunaga \cite{Fuk},
every node $v$ gets a ``bonus'' $p_v \leq d_v$ if it is selected to $S$,
namely, $\psi(S,v)=p_v+\kappa(S \setminus \{v\},v)$ if $v \in S$ and $\psi(S,v)=\kappa(S,v)$ otherwise,
where $\kappa(S,v)$ is the maximum number of internally disjoint $(S,v)$-paths.
While the approximability of many {\sf SL} problems was seemingly settled to $\Theta(\ln d(V))$ in \cite{SMF},
Fukunaga \cite{Fuk} showed that for undirected graphs one can achieve ratio $O(k \ln k)$ for his variant,
where $k=\max_{v \in V}d_v$ is the maximum demand.
We improve this by achieving ratio $\min\{p^* \ln k,k\} \cdot O(\ln (k/q^*))$
for a more general version with node capacities, where 
$p^*=\max_{v \in V} p_v$ is the maximum bonus and
$q^*=\min_{v \in V} q_v$ is the minimum capacity.
In particular, for the most natu\-ral case $p^*=1$ considered in \cite{Fuk} 
we improve the ratio from $O(k \ln k)$ to $O(\ln^2k)$.
We also get ratio $O(k)$ for the edge-connectivity version,
for which no ratio that depends on $k$ only was known before.
To derive these results, we consider a particular case of the 
{\sf Survivable Network} ({\sf SN}) problem when all edges of positive cost form a star.
We give ratio $O(\min\{\ln n,\ln^2 k\})$ for this variant, 
improving over the best ratio known for the general case $O(k^3 \ln n)$ 
of Chuzhoy and Khanna \cite{CK-new}.

In addition, we show that directed {\sf SL} 
with unit costs is $\Omega(\log n)$-hard to approximate even for $0,1$ demands,
while {\sf SL} with uniform demands can be solved in polynomial time. 
Finally, we consider a generalization of {\sf SL} where we also have 
edge-costs $\{c_e:e \in E\}$ and flow-cost bounds $\{b_v: v \in V\}$, and require that 
for every node $v$, the minimum cost of a flow of value $d_v$ from $S$ to $v$ is at most $b_v$.
We show that this problem admits approximation ratio $O(\ln d(V)+\ln(nc(E)-b(V))$.
\end{abstract}

\section{Introduction}

In {\sf Source Location} ({\sf SL}) problems,
the goal is to select a minimum cost source set $S \subseteq V$
such that the connectivity from $S$ to any node $v$ is at least the demand $d_v$ of $v$. 
Formally, the generic version of this problem is as follows.

\begin{center} 
\fbox{
\begin{minipage}{0.960\textwidth}
\noindent
{\sf Source Location} ({\sf SL}) \\
{\em Instance:}
A graph $G=(V,E)$ with node-costs $c=\{c_v:v \in V\}$, 
connectivity demands $d=\{d_v:v \in V\}$, and 
connectivity function $\psi:2^V \times V \rightarrow \mathbb{Z}_+$. \\
{\em Objective:}
Find a minimum cost source node set $S \subseteq V$ such that 
$\psi(S,v) \geq d_v$ for every $v \in V$.
\end{minipage}
}
\end{center}

Several connectivity functions $\psi$ appear in the literature.
To avoid considering many cases, we suggest two generic types, that include previous particular cases.

\begin{definition}
An integer set-function $f$ on a groundset $U$ is submodular if $f(X)+f(Y) \geq f(X \cap Y)+f(X \cup Y)$
for all $X,Y \subseteq U$, 
and $f$ is non-decreasing if $f(X) \leq f(Y)$ for all $X \subseteq Y \subseteq U$.
\end{definition}

\begin{definition} \label{d:pq}
Let $G=(V,E)$ be a graph with node-capacities $\{q_u:u \in V\}$.
For $S \subseteq V$ and $v \in V$ the 
{\em $(S,v)$-$q$-connectivity} $\lambda_G^q(S,v)$ 
is the maximum number of edge-disjoint paths from $S \setminus \{v\}$ to $v$ in $G$ 
such that every node $u$ is an internal node in at most $q_u$ paths.
Given {\em connectivity bonuses} $\{p_u \geq q_u:u \in V\}$, 
the {\em $(S,v)$-$(p,q)$-connectivity} $\lambda_G^{p,q}(S,v)$ is defined by: 
$\lambda_G^{p,q}(S,v)=p_v+\lambda_G^q(S,v)$ if $v \in S$, and
$\lambda_G^{p,q}(S,v)=\lambda_G^q(S,v)$ otherwise.
\end{definition}

We will say that a connectivity function $\psi(S,v)$ is submodular if for every $v \in V$,
the function $f_v(S)=\psi(S,v)$ is submodular and non-decreasing.
We will say that $\psi(S,v)$ is survivable if it is of the type $\psi(S,v)=\lambda_G^{p,q}(S,v)$.
It is not hard to see that every survivable connectivity function is submodular 
(see Section~\ref{s:sur}), but the inverse is not true in general. 
This gives only two types of {\sf SL} problems. 

\begin{center} 
\fbox{
\begin{minipage}{0.960\textwidth}
\noindent
{\sf Submodular SL}:     The connectivity function $\psi(S,v)$ is submodular. \\
{\sf Survivable SL}: \ \ The connectivity function $\psi(S,v)$ is survivable. 
\end{minipage}
}
\end{center}

We list four connectivity functions that appear in the literature.
All of them are submodular, and three of them are also survivable.
Given an {\sf SL} instance let 
$k=\max_{v \in V}d_v$ denote the maximum demand,
and in the case of {\sf Survivable SL} let 
$p^*=\max_{u \in V} p_u$ denote the maximum connectivity bonus and 
$q^*=\min_{u \in V} q_u$ denote the minimum node capacity.
In what follows assume that $1 \leq q_u \leq p_u \leq k$ for all $u \in V$,
and thus $1 \leq p^* \leq k$ and $1 \leq q^* \leq k$ holds.
\begin{enumerate}
\item
{\sf $\lambda$-SL}: 
$\lambda_G(S,v)$ is the maximum number of pairwise edge-disjoint $(S,v)$-paths if $v \notin S$ and 
$\lambda_G(S,v)=\infty$ otherwise. \\
This is {\sf Survivable SL} with $p_u=q_u=k$ for every $u \in V$.
\item
{\sf $\kappa$-SL}: 
$\kappa(S,v)$ is the maximum number of $(S,v)$-paths no two of which 
have a common node in $V \setminus (S \cup v)$ if $v \notin S$,
and $\kappa(S,v)= \infty$ otherwise. 
\item
{\sf $\hat{\kappa}$-SL}:
$\hat{\kappa}(S,v)$ is the maximum number of $(S,v)$-paths no two of which 
have a common node in $V \setminus \{v\}$ if $v \notin S$, and $\hat{\kappa}(S,v)= \infty$ otherwise. \\
This is {\sf Survivable SL} with $p_u=k$ and $q_u=1$ for every $u \in V$.
\item
{\sf $\kappa'$-SL}: 
$\kappa'(S,v)=\hat{\kappa}(S,v)$ if $v \notin S$ and 
$\kappa'(S,v)=p_v+\hat{\kappa}(S \setminus \{v\},v)$ if $v \in S$. \\
This is {\sf Survivable SL} with $q_u=1$ for every $u \in V$.
\end{enumerate}

\begin{table}[htbp] \label{tbl:ratios}
\begin{tabular}{|l||l|l||l|l|} \hline
{\hphantom{a} \boldmath $c$ \ \& \ $d$}  
                 & \multicolumn{2}{c||}{{\boldmath $\lambda$} ($p,q \equiv k$)}
                 & \multicolumn{2}{c|}{{\boldmath  $\kappa$}} \\\hline
                 & {\em Undirected} & {\em Directed}
                 & {\em Undirected} & {\em Directed}

\\\hline
GC \& GD         & $\Theta(\ln d(V))$ \cite{BKP,SMF}
                 & $\Theta(\ln d(V))$ \cite{BKP,SMF} 
                 & $\Theta(\ln d(V))$ \cite{BKP,SMF}
                 & $\Theta(\ln d(V))$ \cite{BKP,SMF}                  
\\\hline
GC \& UD         & in P \cite{AIMF}
                 & $O(\ln d(V))$ \cite{BKP}  
                 & $O(\ln d(V))$ \cite{BKP}  
                 & $O(\ln d(V))$ \cite{BKP}  
\\\hline
UC \& GD         & in P \cite{AIMF}
                 & $O(\ln d(V))$ \cite{BKP}  
                 & $O(\ln d(V))$ \cite{BKP}  
                 & $O(\ln d(V))$ \cite{BKP}  
\\\hline
UC \& UD         & in P \cite{TSSA}
                 & in P \cite{IMAH}
                 & $O(\ln d(V))$ \cite{BKP}  
                 & $O(\ln d(V))$ \cite{BKP}  
\\\hline \hline  
                 & \multicolumn{2}{c||}{{\boldmath $\hat{\kappa}$} ($p \equiv k$, $q\equiv 1$)}
                 & \multicolumn{2}{c|}{{\boldmath  $\kappa'$}      ($q \equiv 1$)} \\\hline
                 
GC \& GD         & $\Theta(\ln d(V))$ \cite{SMF}  
                 & $\Theta(\ln d(V))$ \cite{SMF}   
                 & $O(\ln d(V))$ \cite{Fuk}  
                 & $O(\ln d(V))$ \cite{Fuk}        
\\
                 & $O(k \ln k)$ \cite{Fuk}
                 &  
                 & $O(k \ln k)$ \cite{Fuk}
                 &       
\\\hline
GC \& UD         & in P \cite{NII}
                 & in P \cite{NII}
                 & 
                 & 
\\\hline
UC \& GD         & $O(\ln d(V))$ \cite{SMF}  
                 & $O(\ln d(V))$ \cite{SMF}  
                 & 
                 & 
\\
                 & $O(k)$ \cite{Ishii}
                 &
                 &
                 & 
\\\hline
UC \& UD         & in P \cite{NII}
                 & in P \cite{NII}
                 & 
                 & 
\\\hline
\end{tabular}
\vspace{0.2cm}
\caption{
Previous approximation ratios and lower bounds for {\sf SL} problems. 
GC and UC stand for general and uniform costs, 
GD and UD stand for general and uniform demands, respectively.}
\vspace*{-0.7cm}
\end{table}

The known approximability status of {\sf SL} problems with connectivity functions 
$\lambda,\kappa,\hat{\kappa},\kappa'$, is summarized in Table~1; see also a survey in \cite{NI}.
The approximability of {\sf $\lambda,\kappa,\hat{\kappa}$-SL} problems was settled to $O(\ln d(V))$ in \cite{SMF}
(where $d(V)=\sum_{v \in V} d_v$),
while Fukunaga~\cite{Fuk} showed that undirected {\sf $\kappa'$-SL} admits ratio $O(k\ln k)$.
We prove the following.

\begin{theorem} \label{t:SL}
{\sf Submodular SL} admits ratio $O(\ln d(V))$.
Undirected {\sf Survivable SL} admits ratio $\min\{p^* \ln k,k\} \cdot O(\ln (k/q^*))$.
\end{theorem}

Theorem~\ref{t:SL} has several consequences.
While ratio $O(\ln (d(V))$ was known for connectivity functions 
$\lambda$,$\kappa,\hat{\kappa}$ \cite{SMF},
our proof of a more general result is simpler and shorter than the proof of each particular case.
For undirected graphs, the second part of Theorem~\ref{t:SL} implies that {\sf Survivable SL} problems admit 
ratio $O(k \ln (k/q^*))$ if $p^* \geq k/\ln k$ (e.g., $p^*=k$ in {\sf $\lambda$-SL} and {\sf $\hat{\kappa}$-SL}),
and ratio $O(p^* \ln k \ln (k/q^*))$ if $p^* < k/\ln k$ (e.g., {\sf $\kappa'$-SL} with $p^*=1$).
In the case of $\lambda$-{\sf SL} we have $q^*=k$ which implies ratio $O(k)$ -- 
this is the first ratio for $\lambda$-{\sf SL} that depends on $k$ only.
Summarizing, we have the following new results for connectivity functions $\lambda,\kappa'$. 

\begin{corollary}
{\sf $\lambda$-SL} admits ratio $O(k)$ and 
{\sf $\kappa'$-SL} admits ratio $O(p^* \ln^2 k)$.
\end{corollary}

To prove Theorem~\ref{t:SL}, we consider 
the following known problem.

\begin{center} 
\fbox{
\begin{minipage}{0.960\textwidth}
\noindent
{\sf Survivable Network} ({\sf SN}) \\
{\em Instance:} \ 
A graph $G=(V,E)$ with edge-costs $\{c_e:e \in E\}$ and node capacities $\{q_u:u \in V\}$,
and connectivity requirements $r=\{r_{sv}:sv \in D\}$ on a set $D$ of demand edges on $V$. \\
{\em Objective:}
Find a minimum-cost subgraph $G'$ of $G$ such that  
$\lambda^q_{G'}(s,v) \geq r_{sv}$ for every $sv \in D$.
\end{minipage}
}
\end{center}

Let $k=\max_{sv \in D} r_{sv}$ denote the maximum requirement.
For $q \equiv k$ we get the edge-connectivity version (which admits ratio $2$ \cite{Jain}), 
while for $q \equiv 1$ we get the node-connectivity version.
{\sf SN} admits a folklore ratio $O(|D|)$, 
and for directed graphs no better ratio is known.
Undirected {\sf SN} admits ratios $O\left(k^3 \log n \right)$ \cite{CK-new},
and has an $\Omega(\{k^{1/4},|D|^{1/6})$ approximation lower bound \cite{Laek}.  
We consider the following particular case of {\sf SN}, studied previously in \cite{KN-aug,Fuk}.

\begin{center} 
\fbox{
\begin{minipage}{0.960\textwidth}
{\ssn}: the set $F$ of edges in $E$ of positive cost is a star with center $a$. 
\end{minipage}
}
\end{center}

The {\ssn} problem was defined in \cite{KN-aug},
where it was shown to admit ratio $O(\ln n)$ for unit edge-costs.
The study of this problem in \cite{KN-aug} is motivated by the observation that directed {\sf SN} instances 
when $(V,F)$ is a complete graph with unit edge costs (so called {\sf Connectivity Augmentation} problem)
can be reduced to {\sf Star-NA} with a loss of a factor of $2$ in the approximation ratio. 
Fukunaga \cite{Fuk} observed that {\sf $\kappa'$-SL} is a special case of {\sf Star-SN}. 
Hence {\sf Star-SN} problems are important, as they generalize several well known problems.

Our results for {\sf Star-SN}, summarized in the following theorem, 
substantially improve over the best known ratios for {\sf SN}. 
These results are of independent interest, as they show that {\sf Star-SN} 
admits much better ratios than general {\sf SN}.

\begin{theorem} \label{t:a}
{\sf Star-SN} admits approximation ratios $O(\ln n)$ for directed graphs, 
and $O(\min\{\ln n,\ln k \ln(k/q^*)\})$ for undirected graphs.
\end{theorem}

We further study {\sf SL} problems and prove the following.

\begin{theorem} \label{t:hard}
Directed {\sf Survivable SL} for $k=1$ and unit costs is $\Omega(\log n)$-hard to approximate.
Directed/undirected {\sf $\kappa'$-SL} with uniform demands and with $p \equiv 1$ 
can be solved in polynomial time. 
\end{theorem}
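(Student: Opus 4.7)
The two parts are essentially independent. For the hardness, my plan is a reduction from {\sf Set Cover}, whose $\Omega(\log n)$-inapproximability is classical. Given a set-cover instance with universe $U$ and family $\mathcal{S}$, I would build a directed bipartite graph on vertex set $V=\mathcal{S}\cup U$ with an arc from $S_i$ to $e_j$ whenever $e_j\in S_i$. Assign unit costs to every vertex and set $d_{e_j}=1$ for every element and $d_{S_i}=0$ for every set. With $k=1$ we have $p\equiv q\equiv 1$, so the requirement $\lambda^{p,q}_G(S,v)\geq 1$ reduces to ``$v\in S$ or some $u\in S$ reaches $v$''. Because element-vertices have no out-arcs, a minimum-cost source set may be assumed to lie entirely in $\mathcal{S}$, in which case feasibility is literally the set-cover condition; hence the optimum of the {\sf SSL} instance equals the minimum set-cover size, and the $\Omega(\log n)$ threshold transfers.

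For the polynomial-time claim, the plan is to use Observation~\ref{o:reduction} to restate $\kappa'$-{\sf SSL} with uniform demand $k$ as the following instance of {\sf $s$-Based Rooted SNA}: find a minimum node-cost $S\subseteq V$ such that, in the graph $G'=G+\{sv:v\in S\}$ obtained by adding a new root $s$, one has $\kappa_{G'}(s,v)\geq k$ for every $v$. This is the classical rooted $k$-vertex-connectivity problem with node costs and uniform requirement~$k$. By Menger's theorem, feasibility is equivalent to a covering condition on ``small'' vertex-cuts: for every $T\subseteq V$ with $|T|<k$ and every connected component $K$ of $G-T$, the set $S$ must meet $K$, with a slight strengthening when some $v\in S$ lies in $K$. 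For uniform $k$ this family of cut constraints is crossing, so the natural LP relaxation is integral and the minimum-cost feasible $S$ can be computed in polynomial time via matroid intersection in the spirit of Frank's algorithm for rooted $k$-out-connectivity in digraphs; the undirected case can be treated by a standard directed reduction.

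The main obstacle I anticipate is handling the two boundary cases embedded in the definition of $\kappa'$, namely the $+1$ contributed by a source $v\in S$ and the weakening to $\hat{\kappa}(S\setminus v,v)\geq k-1$ there. After the reduction to rooted augmentation both effects collapse into the single uniform condition $\kappa_{G'}(s,v)\geq k$, and then uniformity of the demand is what makes the problem tractable by reducing the cut-covering constraints to a supermodular structure; once this is in place, polynomial solvability follows from standard combinatorial optimization machinery for rooted $k$-connectivity.
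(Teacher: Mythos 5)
Both halves of your plan are workable, but they differ from the paper's proof in interesting ways, and one step is argued by a reason that does not quite carry the weight you put on it.

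For the hardness part, the paper reduces {\sf Set-Cover} to directed {\sf $s$-Based Connectivity Augmentation} by building the bipartite set-element digraph, \emph{duplicating the element side $M$ times} with $M=(|A|+|B|)^2$, and then requiring $r_{sv}=1$ on all element copies; the $M$ copies make the option of ``buying an element'' (i.e.\ taking an edge $sb$ instead of covering $b$ through a set) cost at least $M$, so the {\sf SNA} optimum equals the set-cover optimum outright. Your reduction, going directly to the {\sf SSL} instance with no duplication, is simpler, and it does work; but your justification that ``because element-vertices have no out-arcs, a minimum-cost source set may be assumed to lie entirely in $\mathcal{S}$'' is not by itself a proof. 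The absence of out-arcs only tells you that putting $e_j$ into $S$ helps nothing but $e_j$; it does not by itself let you remove $e_j$ from $S$, since that would violate $e_j$'s own demand. The missing half is the exchange argument: for any feasible $S$ and any $e_j\in S\cap U$, replace $e_j$ by an arbitrary set $S_i$ with $e_j\in S_i$ (such a set exists whenever the {\sf Set-Cover} instance is feasible), which is still feasible and no more costly. Once stated, this gives exactly ``optimum of {\sf SSL} instance $=$ minimum set-cover size,'' and your reduction is then cleaner than the paper's $M$-copies version. You should also note, as the paper does via ``for $k=1$ all connectivity functions of {\sf SSL} are equivalent,'' that $k=1$ collapses the various $p,q$-variants, so the hardness applies to {\sf SSL} generically, not merely to one connectivity function.

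For the polynomial-time part you and the paper take essentially the same route: via Observation~\ref{o:reduction}, {\sf $\kappa'$-SSL} with uniform demand $k$ becomes {\sf $s$-Based Rooted SNA} with uniform rooted requirements $r_{sv}=k$ (and, since $p\equiv 1$, edge-costs and node-costs coincide), and this is then solved exactly. The paper simply cites Frank and Tardos~\cite{FT} for the directed case and observes the undirected case reduces to it; you instead sketch the underlying machinery (Menger-type cut-covering constraints forming a crossing family, LP integrality, matroid intersection), which is a re-derivation of what \cite{FT} already packages. Your sketch is morally right, but as written the phrase ``the natural LP relaxation is integral'' would need the crossing-supermodularity argument spelled out, and the ``slight strengthening when some $v\in S$ lies in $K$'' is exactly the sort of boundary case that the reduction to {\sf $s$-Based Rooted SNA} is designed to absorb; it is cleaner to formulate the cut condition in the augmented graph $G'$ (cuts separating $s$ from some $v$) rather than in $G$ directly, and then invoke \cite{FT} as a black box.
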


Finally, we consider the following generalization of {\sf Survivable SL}. 
Given an instance of {\sf Survivable SL} and edge-costs $c=\{c_e:e \in E\}$, 
let $\mu^{p,q}_G(S,v)$ denote the minimum cost of an edge set  
$F \subseteq E$ such that $\lambda^{p,q}_{(V,F)}(S,v) \geq d_v$,
where $\mu_G^{p,q}(S,v)=\infty$ if no such edge set $F$ exists (namely, if $\lambda_G^{p,q}(S,v) < d_v$).

\begin{center} 
\fbox{
\begin{minipage}{0.960\textwidth}
\noindent
{\sf Survivable SL with Flow-Cost Bounds}  \\
{\em Instance:}
As in {\sf Survivable SL}, but in addition we are also given edge-costs $\{c_e:e \in E\}$
and flow-cost bounds $\{b_v \leq c(E): v \in V\}$. \\
{\em Objective:}
As in {\sf Survivable SL}, with an additional constraint $\mu_G^{p,q}(S,v) \leq b_v$ for every $v \in V$.
\end{minipage}
}
\end{center}

\begin{theorem} \label{t:fc-bounds}
{\sf Survivable SL with Flow-Cost Bounds} admits approximation ratio $H(d(V))+H(nc(E)-b(V))$.
\end{theorem}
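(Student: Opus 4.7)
The plan is to cast the problem as a {\sf Submodular Cover} instance and solve it with the greedy algorithm, executed in two phases so that the additive ratio $H(d(V))+H(c(E))$ emerges naturally as a sum of per-phase bounds.

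Phase~I (connectivity). First I would ignore the flow-cost bounds $b_v$ and the edge-costs $c_e$, and invoke the {\sf SSL} approximation of Theorem~\ref{t:a} on the underlying $(p,q)$-connectivity instance to obtain a set $S_1 \subseteq V$ with $\lambda^{p,q}_G(S_1,v)\geq d_v$ for all $v$. Since the flow-cost-bounds optimum $S^\star$ is trivially feasible for this pure {\sf SSL} subproblem, the cost of $S_1$ is at most $H(d(V))\cdot \mathrm{OPT}$. A useful byproduct is that $\mu_G(S_1\cup S,v)$ is now finite for every $S \subseteq V$ and every $v$, which lets Phase~II work with finite potentials.

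Phase~II (flow cost). Starting from $S_1$, I would greedily extend $S_1$ to drive down the residual total flow-cost excess
$$\Phi(S) \;=\; \sum_{v \in V} \bigl(\mu_G(S_1 \cup S,v) - b_v\bigr)_+,$$
treating the complementary coverage $f(S)=\Phi(\emptyset)-\Phi(S)$ as the cover potential of a {\sf Submodular Cover} instance with ground set $V\setminus S_1$ and node-costs $c$. The key technical step, which I expect to be the main obstacle, is to establish the \emph{supermodularity of $\mu_G(\cdot,v)$ in its source set}, so that $\Phi$ is supermodular non-increasing and hence $f$ is submodular non-decreasing. I plan to prove this by a flow-uncrossing exchange argument: given optimal min-cost flows for $S_A$ and $S_B$, combine them into flows for $S_A\cap S_B$ and $S_A\cup S_B$ whose total cost does not exceed that of the originals; this is the cost-weighted analogue of the classical proof that max-flow is submodular in the source set, and may also be derivable via an LP-duality argument on the min-cost flow polyhedron.

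Once submodularity is in place, Wolsey's greedy algorithm returns a Phase~II augmentation $S_2$ with cost at most $H(\Phi(\emptyset))\cdot \mathrm{OPT}'$, where $\mathrm{OPT}'\leq \mathrm{OPT}$ is the minimum cost of augmenting $S_1$ to full feasibility (since $S^\star\setminus S_1$ is one such augmentation). The final ingredient is the bound $\Phi(\emptyset)\leq c(E)$, which I expect to obtain by choosing the potential so that it is capped by the cost of a single shared witness edge-set (for example, by replacing each per-$v$ excess by $\min\{\mu_G(S_1,v),c(E)\}-\min\{b_v,c(E)\}$ and observing that a suitable ``global'' witness edge-set $F^\star$ of total cost at most $c(E)$ simultaneously witnesses all per-$v$ deficits). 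Adding the two phases then yields the claimed overall ratio $H(d(V))+H(c(E))$.
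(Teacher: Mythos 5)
Your two-phase plan matches the paper's: the paper reduces {\sf SSL with Flow-Cost Bounds} to a {\sf Double Submodular Cover} instance with $f(S)=\sum_v\min\{\lambda^{p,q}_G(S,v),d_v\}$ and $g(S)=\sum_v\min\{-\mu_G(S,v),-b_v\}$, first covers $f$ to obtain a set $S_f$ (so that $g$ becomes finite on supersets of $S_f$), and then covers the residual $h(S)=g(S_f\cup S)$. Your potential $\Phi(\emptyset)-\Phi(S)$ equals $h(S)-h(\emptyset)$, and your supermodularity plan for $\mu(\cdot,v)$ (flow-uncrossing or LP duality) is exactly the needed ingredient; the paper outsources this to \cite{BKP} for edge-connectivity, noting the $(p,q)$ case is similar. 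So the decomposition and the two key lemmas are the same.

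The gap is in your Phase~II ratio. You invoke Wolsey with $H(\Phi(\emptyset))$, but Wolsey's bound is $H\bigl(\max_v[\Phi(\emptyset)-\Phi(\{v\})]\bigr)$, not $H$ of the total deficit; and the estimate $\Phi(\emptyset)\leq c(E)$ that your route needs is false. For example, let $G$ be the path $s,u_1,\dots,u_m,w$ plus edges $wv_1,\dots,wv_n$, all of cost $1$; set $d_{v_i}=1$, $b_{v_i}=1$, all other demands $0$, $S_1=\{s\}$. Then $\mu_G(\{s\},v_i)=m+2$ for each $i$, so $\Phi(\emptyset)=n(m+1)$, while $c(E)=m+n+1$ (e.g.\ $110$ versus $21$ when $m=n=10$). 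Your proposed truncation $\min\{\mu_G(S_1,v),c(E)\}-\min\{b_v,c(E)\}$ changes nothing here, since every quantity is already below $c(E)$: the problem is that the per-$v$ excesses are not disjointly witnessed by edges, so their sum can dwarf $c(E)$. To match the paper you must use the max-marginal form of Wolsey's bound and argue $\max_v[h(\{v\})-h(\emptyset)]\leq c(E)$; note that the paper asserts this estimate with no further justification, and you should examine it carefully, since in the same example adding $w$ drops $\mu(\cdot,v_i)$ from $m+2$ to $1$ for each of $n$ sinks, giving $h(\{w\})-h(\emptyset)=n(m+1)$, which again exceeds $c(E)$. In short, the $H(c(E))$ term requires a more careful argument than either your proposal or the paper's sketch currently supplies.
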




\section{Relations between {\sf SL} and {\sf SN} problems}

In this section we explain the relation between {\sf SL} and {\sf SN} problems. 
For that, it would be convenient to consider the augmentation version of the {\sf SN} problem,
with arbitrary connectivity functions and allowing node-costs. 
Given a function $w=\{w_u:u \in U\}$ on a groundset $U$ and $U' \subseteq U$, 
let $w(U')=\sum_{u \in U'}w_u$. If $w$ is a cost function on $U$ and $I$ is an edge-set on $U$,
then the cost (or the node-costs) $w(I)$ of $I$ is the cost of the set of the endnodes of $I$.
Formally, we call the problem we need {\sf Network Augmentation}, and it is defined as follows.

\begin{center} 
\fbox{
\begin{minipage}{0.960\textwidth}
\noindent
{\sf Network Augmentation} ({\sf NA}) \\
{\em Input:} \ 
A graph $G=(V,E)$, an edge-set $F$ on $V$, a cost function $c$ on $F$ or on $V$, 
connectivity requirements $r=\{r_{sv}:sv \in D\}$ on a set $D$ of demand edges on $V$,
and a family $\{f_{sv}:2^F\rightarrow \mathbb{Z}_+:sv \in D\}$ of connectivity functions. \\
{\em Output:}
A min-cost edge-set $I \subseteq F$ such that  
$f_{sv}(I) \geq r_{sv}$ for every $sv \in D$.
\end{minipage}
}
\end{center}

As in the case of {\sf SL} problems, we consider two types of {\sf NA} problems:

\begin{center} 
\fbox{
\begin{minipage}{0.960\textwidth}
{\sf Submodular NA}: connectivity functions are submodular and non-decreasing. \\
{\sf Survivable NA}: \ \ connectivity functions are $f_{sv}(I)=\lambda^q_{G+I}(s,v)$. 
\end{minipage}
}
\end{center}

Note that {\sf SN} is a particular case of {\sf Survivable NA}
when $E=\emptyset$, but it is easy to see that for edge-costs the problems are equivalent.
As we shall see, {\sf SL} is related to the following two particular cases of {\sf NA} with node-costs:

\begin{center} 
\fbox{
\begin{minipage}{0.960\textwidth}
{\sf Rooted NA}:     \ \ $D$ is a star with center $s$. \\
{\cna}:                  $D,F$ are both stars with a common center $s$.
\end{minipage}
}
\end{center}

Fukunaga \cite{Fuk} observed that {\sf $\kappa'$-SL} is equivalent 
(via an approximation ratio preserving reduction) to 
{\sf Survivable {\cna}} with {\em edge-costs} and $q \equiv 1$. 
Here we further observe the following.
For an edge-set/graph $J$ let $\delta_J(X)$ denote the set of edges in $J$ from $X$ to $V \setminus X$. 

\begin{observation} \label{o:reduction}
For both directed and undirected graphs, 
{\sf Survivable SL} is equivalent to {\sf Survivable {\cna}} 
with node-costs such that $\delta_G(s)=\emptyset$ and $c(s)=0$.
\end{observation}
\begin{proof}
Given an instance of {\sf Survivable SL} construct an instance of {\sf Survivable {\cna}} as follows:
add to $G$ a new node $s$ of cost $0$, and for every $v \in V$ set $r_{sv}=d_v$ and 
put $p_v$ edges from $s$ to $v$ into $F$.
Conversely, given an instance of {\sf Survivable {\cna}}, 
construct an instance of {\sf Survivable SL} as follows. 
Remove $s$ from $G$, and for every $v \in V$ set 
$p_v$ to be the number of edges in $F$ from $s$ to $v$ and $d_v=r_{sv}$.
In both directions, it is not hard to see that $S$ is a solution to the {\sf Survivable SL} instance,
if, and only if, the edge set $I$ of all edges in $F$ from $s$ to $S$  
is a solution to the {\sf Survivable {\cna}} instance, and clearly $I$ and $S$ have the same node-cost.  
\qed
\end{proof}

The best known ratios for {\sf Survivable NA} are $O(k^3 \ln n)$ for edge-costs \cite{CK-new},
and $O\left(k^4 \log^2 |D|\right)$ for node-costs \cite{N-Focs,Vakilian}.
The best known ratio for {\sf Rooted Survivable NA} are $O(k \ln k)$ for edge-costs \cite{N-Focs}
and $O\left(k^2 \log n\right)$ for node-costs \cite{N-Focs,Vakilian},
and no better ratios were known even for {\sf Survivable {\cna}}, see \cite{Fuk} where 
ratio $O(k\ln k)$ for undirected {\sf Survivable {\cna}} was deduced in two ways: 
from the ratio $O(k\ln k)$ for {\sf Rooted Survivable NA} \cite{N-Focs}, and via iterative rounding.
Our results for {\sna}, that imply Theorems \ref{t:SL} and \ref{t:a}, 
are summarized in the following three theorems. Let $H(j)$ denote the $j$th Harmonic number,
and here for an {\sf NA} instance let $p^*$ denote the maximum number of parallel edges in $F$.

\begin{theorem} \label{t:sub}
Directed {\sf Submodular {\sna}} admits ratio $H(\alpha)$ in the case of edge-costs and ratio 
$H(\beta)$ in the case of node-costs, where 
$$
\alpha = \max\limits_{e \in F} \sum_{uv \in D} [f_{uv}(\{e\})-f_{uv}(\emptyset)] \ \ \ \ \ 
\beta = \max\limits_{z \in V} \sum_{uv \in D}[f_{uv}(\delta_F(z))-f_{uv}(\emptyset)] \ .$$
\end{theorem}

\begin{theorem} \label{t:sur}
For directed graphs, {\sf Survivable NA} is a particular case of {\sf Submodular NA},
and $\alpha \leq |D|$ and $\beta \leq \min\{r(D),p^*|D|\}$ holds for {\sf Survivable Star-NA}.
\end{theorem}

\begin{theorem} \label{t:und}
Undirected {\sf Survivable {\sna}} admits ratio $O(\ln k \ln (k/q^*)))$ for edge-costs and 
$\min\{p^* \ln k,k\} \cdot O(\ln (k/q^*))$ for node-costs.
\end{theorem}

We briefly mention the techniques we use to prove these theorems.
Theorem~\ref{t:sub} is essentially an easy application of the greedy algorithm of Wolsey \cite{W}
for the {\sf Submodular Cover} problem.
Parts of Theorem~\ref{t:sur} were implicitly proved in \cite{KN-aug}, but our proof 
is both more general and substantially simpler.
Our main technical contribution is Theorem~\ref{t:und}.
To prove this theorem, we consider the augmentation version of {\sf Survivable {\sna}} with edge-costs 
where the goal is to increase the connectivity by one between the pairs in $D$.
Using LP-scaling we show that ratio $\rho$ for the augmentation version implies ratio 
$O(\rho \ln k)$ for the edge-costs version of the general problems, and ratio 
$\min\{p^* \ln k, k\} \cdot O(\rho)$ for the node-costs version.
Then we design an $O(\ln (k/q^*))$-approximation algorithm for the augmentation version.
This is achieved by formulating the augmentation problem as a {\sf Biset-Family Edge-Cover} problem, 
reducing the later problem to the problem of finding a minimum cost vertex cover in a hypergraph,
and using a theorem from \cite{N-aug} to show that the maximum degree in the obtained hypergraph 
is $O\left({(k/q^*)}^2\right)$.

\section{Proof of Theorem~\ref{t:sub}} \label{s:sub}

All graphs in this and the next sections are assumed to be directed. 
To prove Theorem~\ref{t:sub} we use a result due to Wolsey \cite{W} about a performance of a 
greedy algorithm for submodular covering problems. In a generic covering problem we are given
by a value oracle two set functions on a groundset $U$: 
a cost-function $c:2^U \rightarrow \mathbb{R}$ and a progress function $g:2^U \rightarrow \mathbb{Z}$.  
The goal is to find $A \subseteq U$ of minimum cost such that $g(A)=g(U)$.
The {\sf Submodular Cover} problem is a special case when the function $g$ is submodular and non-decreasing,
and $c(S)=\sum_{v \in S} c(v)$ for some $c:U\rightarrow \mathbb{R}^+$.
Wolsey \cite{W} proved that then, the greedy algorithm, that 
as long as $g(A)<g(U)$ repea\-tedely adds to $A$ an element $u \in U \setminus A$ with maximum
$\frac{g(A \cup \{u\})-g(A)}{c_u}$, has approximation ratio 
$H\left(\max_{u \in U} g(\{u\})-g(\emptyset)\right)$.

We start with the case of edge-costs. Then 
the function $g$ is defined in the same way as in \cite{KN-aug,SMF}:
$U=F$ and for $I \subseteq F$ 
$$g(I)=\sum_{uv \in D}\min\{r(u,v),f_{uv}(I)\}  .$$
It is not hard to verify that $g$ is non-decreasing, 
and that $I$ is a feasible solution to an {\sf NA} instance if and only if $g(I)=g(F)=r(D)$.
Also, $g(\{e\}) - g(\emptyset) \leq \sum_{uv \in D} [f_{uv}(\{e\}) - f_{uv}(\emptyset)]$ for any $e \in F$.
We show that $g$ is submodular.
It is known (c.f. \cite{Sch}) that if $h$ is submodular, then $\min\{r,h\}$ is submodular
for any constant $r$.   
Thus the function $h_{uv}(I)=\min\{r(u,v),f_{uv}(I)\}$ is submodular.
As a sum of submodular functions is also submodular, we obtain that $g$ is submodular.

Now let us consider node-costs. 
For $S \subseteq V$ let $F_S$ denote the set of edges in $F$ from $a$ to $S$, and let $f'_{uv}(S)=f_{uv}(F_S)$.
We have $U=V$ and for $S \subseteq V$ let 
$$g'(S)= \sum_{uv \in D}\min\{r(u,v),f'_{uv}(S)\} \ .$$
As in the edge-costs case, it is not hard to verify that $g'$ is non-decreasing and that 
$S$ is a feasible solution to an {\sf NA} instance if and only if $g'(S)=g'(V)=r(D)$.
Also, $g'(\{z\}) - g'(\emptyset) \leq \sum_{uv \in D} [f_{uv}(\delta_F(z)) - f_{uv}(\emptyset)]$
for any $z \in V$.
We show that $g'$ is submodular. We claim that the submodularity of $f(I)$ implies 
that $f'(S)$ is submodular. This is not true in general, but holds if $F$ is a star, 
and hence for {\sna} instances.
More precisely, it is not hard to verify the following statement, that 
finishes the proof of Theorem~\ref{t:sub}.

\begin{lemma} \label{l:star}
Let $(V,F)$ be a graph and let $f$ be a submodular set function on $F$.
If $F$ is a star, then the set function $f'(S)=f(F_S)$ defined on $V$ is also submodular.
\end{lemma}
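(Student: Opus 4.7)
The plan is to exploit the fact that when $F$ is a star, the map $S \mapsto F_S$ from $2^V$ to $2^F$ is a lattice homomorphism (it preserves intersections and unions), so submodularity transports from $f$ to $f'$ by direct substitution.

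First, I would fix notation: since $F$ is a star centered at some node $a$, every edge $e \in F$ has a unique non-center endpoint, call it $v_e$, and by definition of $F_S$ we have $e \in F_S$ iff $v_e \in S$. Thus membership of $e$ in $F_S$ is determined by whether a single vertex $v_e$ lies in $S$, and an elementary element chase yields
\[
F_{X \cap Y} = F_X \cap F_Y, \qquad F_{X \cup Y} = F_X \cup F_Y
\]
for every $X, Y \subseteq V$.

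The second step is essentially a one-line deduction. Apply submodularity of $f$ to the pair $F_X, F_Y \subseteq F$ and rewrite the right-hand side using the identities above:
\[
f'(X) + f'(Y) = f(F_X) + f(F_Y) \ge f(F_X \cap F_Y) + f(F_X \cup F_Y) = f'(X \cap Y) + f'(X \cup Y),
\]
which is exactly the submodular inequality for $f'$.

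I do not expect any real obstacle here — the content is the observation that a star structure is exactly what is needed to make $S \mapsto F_S$ compatible with the Boolean operations. The one point worth flagging, and the reason the hypothesis cannot be weakened, is that in a general graph an edge has two relevant endpoints, so $F_{X \cup Y}$ may contain edges with one endpoint in $X \setminus Y$ and the other in $Y \setminus X$, which lie in neither $F_X$ nor $F_Y$; that would break the union identity and hence the whole argument.
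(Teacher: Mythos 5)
Your proof is correct and is exactly the argument the paper has in mind (the paper only says ``it is not hard to verify''): when $F$ is a star centered at $a$, membership of an edge $av$ in $F_S$ depends only on whether $v\in S$, so $S\mapsto F_S$ preserves intersections and unions, and the submodular inequality for $f'$ follows immediately from that of $f$ applied to $F_X$ and $F_Y$. Your closing remark about why the star hypothesis is essential is also on point.
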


\section{Proof of Theorem~\ref{t:sur}} \label{s:sur}

We start by showing that in the case of edge-costs,
directed {\sf Survivable NA} is a particular case of {\sf Submodular NA}.
Let $s,v \in V$.
It is easy to see that $f(I)=f_{sv}(I)=\lambda^q_{G+I}(s,v)$ is non-decreasing.
We prove that $f(I)$ is submodular.
For that, we will use the following known characterization of submodularity, c.f. \cite{Sch}: \\
{\em A set-function $f$ on $F$ is submodular if, and only if}
$$
f(I_0 \cup \{e\})+f(I_0 \cup \{e'\}) \geq f(I_0)+f(I_0 \cup \{e,e'\})  \ \ \ \forall I_0 \subset F, e,e' \in F \setminus I_0 \ 
$$
Let us fix $I_0 \subseteq F$.
Revising our notation to $G \gets G+I_0$, $F \gets F \setminus I_0$,
and denoting $h(I)=f(I_0 \cup I)-f(I_0)$, we get that $f$ is submodular if, and only if
$$
h(\{e\})+h(\{e'\}) \geq h(\{e,e'\})  \ \ \ \forall e,e' \in F  \ . 
$$
In our setting, $h(I)=\lambda^q_{G+I}(s,v)-\lambda^q_G(s,v)$ is the increase in 
the $(s,v)$-$q$-connectivity as a result of adding $I$ to $G$.
Thus $0 \leq h(I) \leq |I|$ for any $I \subseteq F$, so 
$0 \leq h(\{e,e'\}) \leq 2$. 
If $h(\{e,e'\})=0$, then we are done;
if $h(\{e,e'\})=1$, then we need to show that $h(\{e\})=1$ or $h(\{e'\})=1$; and 
if $h(\{e,e'\})=2$, then we need to show that $h(\{e\})=1$ and $h(\{e'\})=1$.
We prove the following general statement, that implies the above.

\begin{lemma} \label{l:J}
Let $G=(V,E)$ be a directed graph with node capacities $\{q_v:v \in V\}$,
let $I$ be a set of edges on $V$ disjoint to $E$, let $s,t \in V$, 
and let $h=\lambda^q_{G+I}(s,t)-\lambda^q_G(s,t)$.
Then there is $J \subseteq I$ of size $|J| \geq h$ 
such that $\lambda^q_{G+\{e\}}(s,t)=\lambda^q_G(s,t)+1$ for every $e \in J$.
\end{lemma}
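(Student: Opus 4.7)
The plan is to prove Lemma~\ref{l:J} via a max-flow path-decomposition in the node-split auxiliary network. Apply the standard reduction: replace each vertex $v$ by two copies $v^{\mathrm{in}},v^{\mathrm{out}}$ joined by an edge of capacity $q_v$, and each edge $(u,v)\in E\cup I$ by a unit-capacity edge $(u^{\mathrm{out}},v^{\mathrm{in}})$; write $\hat G$ and $\hat I$ for the resulting networks. Then $\lambda^q_H(s,t)=\lambda_{\hat H}(s^{\mathrm{out}},t^{\mathrm{in}})$ for any graph $H$ with node capacities $q$, so the task reduces to exhibiting $|J|\geq h$ edges of $I$, each of which individually raises the edge-connectivity of $\hat G$ from $s^{\mathrm{out}}$ to $t^{\mathrm{in}}$ by one.

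By integer max-flow and Menger's theorem, there is a family $\mathcal{P}$ of $k+h$ pairwise edge-disjoint $s^{\mathrm{out}}$-$t^{\mathrm{in}}$ paths in $\hat G+\hat I$, where $k=\lambda^q_G(s,t)$. Among all such families, pick one that (i) contains a max-flow-of-$\hat G$ subfamily of $k$ paths avoiding $\hat I$, and (ii) subject to (i), minimises $\sum_{P\in\mathcal{P}}|P\cap\hat I|$. The crucial claim is that in this minimiser, every path meeting $\hat I$ uses \emph{exactly one} edge of $\hat I$. Granting the claim, collect the $\hat I$-edges $e_1,\dots,e_h$ used by the $h$ paths that meet $\hat I$; they are pairwise distinct by edge-disjointness. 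For each $e_i$, the associated path $P_i$ uses only $e_i$ from $\hat I$ and is edge-disjoint from the $k$ $\hat I$-avoiding paths of $\mathcal{P}$; combining yields $k+1$ edge-disjoint $s^{\mathrm{out}}$-$t^{\mathrm{in}}$ paths in $\hat G+\{e_i\}$, so $\lambda^q_{G+\{e_i\}}(s,t)\geq k+1$, giving $e_i\in J$.

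The main obstacle is establishing the ``one $\hat I$-edge per path'' property of the minimiser. The plan is an exchange argument by contradiction: if some $P\in\mathcal{P}$ contains two $\hat I$-edges $e$ and $e'$, the segment of $P$ between them is a walk in $\hat G$, and either a swap with a second path of $\mathcal{P}$ that meets this segment or a local rerouting along the residual splits the pair into two paths each using only one of $e,e'$, strictly decreasing $\sum_{P\in\mathcal{P}}|P\cap\hat I|$ and contradicting minimality. Making the swap rigorous while simultaneously preserving the max-flow subfamily and the edge-disjointness of the whole collection is the delicate step; a backup route is the min-cut dual, where one uses submodular uncrossing of $s$-$t$ cuts in $\hat G+\hat I$ to isolate $h$ edges of $\hat I$ that each cross every tight min-cut of $\hat G$ and so individually raise $\lambda^q_G(s,t)$ by one.
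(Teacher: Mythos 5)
Your node-splitting reduction coincides with the paper's first step, but your main argument is genuinely different from the paper's: the paper takes the dual, cut-based route that you relegate to a one-clause ``backup''. Concretely, it observes that the tight sets of $\hat G$ (sets $S$ with $s\in S$, $t\notin S$, $|\delta_{\hat G}(S)|=\lambda_{\hat G}(s,t)$) form a ring family with a unique minimal member $S_{\min}$ and a unique maximal member $S_{\max}$, and takes $J$ to be the edges of $I$ going from $S_{\min}$ to $V\setminus S_{\max}$; each such edge crosses every tight set and therefore raises $\lambda$ by one, and Menger's theorem is invoked to get $|J|\ge h$.

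Your primal, path-decomposition route has two concrete gaps. First, a family $\mathcal{P}$ satisfying your condition (i) need not exist: take $E=\{(s,a),(a,b),(b,t)\}$ and $I=\{(a,t),(s,b)\}$, so $k=1$ and $h=1$; the unique pair of edge-disjoint paths from $s$ to $t$ in $G+I$ is $\{s\,a\,t,\ s\,b\,t\}$ and neither path avoids $I$, so no max flow of $G+I$ contains a max flow of $G$ and your minimiser ranges over the empty set. Second, the ``one $\hat I$-edge per path'' claim is false and no exchange argument can rescue it: with $E=\emptyset$ and $I=\{(s,a),(a,t)\}$ the unique augmenting path uses both edges of $I$ and there is nothing to swap or reroute. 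In both examples the conclusion of the lemma itself fails ($h=1$ yet no single edge of $I$ raises $\lambda(s,t)$), which shows that any correct proof must invoke the hypothesis, implicit in the application, that all edges of $I$ leave one common node (in the {\sf SSL} reduction they all leave the root $s$; after node-splitting they all leave its out-copy). Under that hypothesis a simple path contains at most one edge of $I$, so your ``delicate'' exchange step becomes vacuous, but your condition (i) still requires proof; by contrast the cut-based argument closes immediately: every tight set $S$ satisfies $|\delta_{\hat I}(S)|\ge h$ by max-flow/min-cut, and since the edges of $\hat I$ share their tail, the $h$ or more edges of $\hat I$ crossing $S_{\max}$ have their tails in $S_{\min}$ and hence cross every tight set. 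I recommend dropping the primal route, writing out the backup in full, and stating the common-tail hypothesis explicitly --- note that the paper's own one-line claim ``$|J|\ge h$ by Menger's Theorem'' also needs it, as the two examples above show.
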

\begin{proof}
Since we consider directed graphs, 
it is sufficient to prove the lemma for the case of edge-connectivity.
For that, apply the following standard reduction that eliminates node capacities:
replace every $v \in V \setminus \{s,t\}$ by two nodes $v^{in}, v^{out}$ connected by $q_v$ parallel edges from  
$v^{in}$ to $v^{out}$ and replace every $uv \in E \cup F$ by an edge from $u^{out}$ to $v^{in}$.
Hence we will prove the lemma for the edge connectivity function $\lambda$.
Let us say that $S \subseteq V$ is {\em tight} if $s \in S$, $v \notin S$, and 
$|\delta_G(S)|=\lambda_G(s,v)$. Let ${\cal F}$ be the family of tight sets.
By Menger's Theorem ${\cal F}$ is non-empty.
It is known that ${\cal F}$ is a ring family, namely, the intersection of all the sets in ${\cal F}$ is 
nonempty, and if $X,Y \in {\cal F}$ then $X \cap Y, X \cup Y \in {\cal F}$. Then ${\cal F}$ has 
a unique inclusion-minimal set $S_{\min}$ and 
a unique inclusion-maximal set $S_{\max}$.
Let $J=\{uv \in I:u \in S_{\min},v \in V \setminus S_{\max}\}$ 
be the set of edges in $I$ that go from $S_{\min}$ to $V \setminus S_{\max}$.
By Menger's Theorem, $|J| \geq h$, and $\lambda_{G+\{e\}}(s,t)=\lambda_G(s,t)+1$
for any $e \in J$. The statement follows.
\qed
\end{proof}

The bound $\beta \leq r(D)$ is obvious, while the other bounds
on $\alpha$ and $\beta$ follow from the simple observation that
for any $s,v \in V$, the set-function on $F$ defined by $f(I)=\lambda^q_{G+I}(s,v)$ 
has the following properties: 
$f(\{e\}) \leq 1$ for any $e \in F$ and $f(\delta_F(z)) \leq |\delta_F(z)| \leq p^*$ for any $z \in V$.



\section{Proof of Theorem~\ref{t:und}} \label{s:und}

All graphs in this and the next section are assumed to be undirected. 
We start by considering the edge-costs case, and then will show that it implies the node-costs
case by reductions.

\begin{definition}
An ordered pair $\hat{X}=(X,X^+)$ of subsets of a groundset $V$ is called a {\em biset} if $X \subseteq X^+$; 
$X$ is the {\em inner part} and $X^+$ is the {\em outer part} of $\hat{X}$,
and $\Gamma(\hat{X})=X^+ \setminus X$ is the {\em boundary} of $\hat{X}$. 
An edge $e$ covers a biset $\hat{X}$ if it has one endnode in $X$ and the other in $V \setminus X^+$.
For a biset $\hat{X}$ and an edge-set/graph $J$ let $\delta_J(\hat{X})$ denote the 
set of edges in $J$ covering $\hat{X}$. 
\end{definition}

Given an instance of {\sf Survivable NA} and a biset $\hat{X}$ on $V$, 
let the requirement of $\hat{X}$ be $r(\hat{X})=\max\{r_{uv}: uv \in \delta_D(\hat{X})\}$
if $\delta_D(\hat{X}) \neq \emptyset$ and $r(\hat{X})=0$ otherwise.
By the $q$-connectivity version of Menger's Theorem (c.f. \cite{KN-sur}), 
$I \subseteq F$ is a feasible solution to an {\sf Survivable NA}
instance if, and only if, $|\delta_I(\hat{X})| \geq h(\hat{X})$ for every bisets $\hat{X}$ on $V$,
where $h$ is a biset-function defined by
\begin{equation} \label{e:p}
h(\hat{X})= \max\{r(\hat{X})-(q(\Gamma(\hat{X}))+|\delta_G(\hat{X})|),0\} 
\end{equation}
Let ${\cal P}_h$ denote the polytope of ``fractional edge-covers'' of $h$, namely,
$$
{\cal P}_h=\left\{x \in \mathbb{R}^F:
x\left(\delta_F(\hat{Y})\right) \geq h(\hat{Y}) \ \forall \mbox{ biset } \hat{Y} \mbox{ on } V, \
0 \leq x_e \leq 1 \ \forall e \in F\right\} \ .
$$
Let $\tau(h)$ denote the optimal value of a standard LP-relaxation 
for edge covering $h$ by a minimum cost edge set, namely, 
$\tau(h)=\min\left\{\sum_{e \in F} c_e x_e: x \in {\cal P}_h \right\}$.

As an intermediate problem, we consider {\sf Survivable NA} instances when 
we seek to increase the connectivity by $1$ for every $uv \in D$, namely, when 
$r_{uv} = \lambda^q_G(u,v)+1$ for all $uv \in D$. 

\begin{center} 
\fbox{
\begin{minipage}{0.960\textwidth}
\noindent
{\sf $D$-Survivable NA} (the edge-costs version) \\
{\em Input:} \ 
A graph $G=(V,E)$ with node-capacities $\{q_v:v \in V\}$, an edge set $F$ on $V$, 
a cost function $c$ on $F$, and a set $D$ of demand edges on $V$. \\
{\em Output:}
Find a min-cost edge-set $I \subseteq E$ such that 
$\lambda^q_{G+I}(u,v) \geq \lambda^q_G(u,v)+1$ for all $uv \in D$.
\end{minipage}
}
\end{center}

Given a {\sf $D$-Survivable NA} instance, we say that a biset $\hat{X}$ is {\em tight} if 
$h(\hat{X})=1$, where $h$ is defined by (\ref{e:p}).
{\sf $D$-Survivable NA} is equivalent to the problem of finding a minimum cost edge-cover of the biset family 
${\cal F}=\{\hat{X}:h(\hat{X})=1\}$ of tight bisets.
Thus the following generic problem includes the {\sf $D$-Survivable NA} problem.

\begin{center} 
\fbox{
\begin{minipage}{0.960\textwidth}
\noindent
{\sf Biset-Family Edge-Cover} \\
{\em Input:} \   
A graph $(V,F)$ with edge-costs and a biset family ${\cal F}$ on $V$. \\
{\em Output:}
Find a min-cost ${\cal F}$-cover $I \subseteq F$. 
\end{minipage}
}
\end{center}

For a biset-family ${\cal F}$ let $\tau({\cal F})$ denote the optimal value of a standard LP-relaxation 
for edge covering ${\cal F}$ by a minimum cost edge set, namely,
$\tau({\cal F})=\tau(h)$ for $h(\hat{X})=1$ if $\hat{X} \in {\cal F}$ and $h(\hat{X})=0$ otherwise.

\begin{proposition} \label{p:scale}
Suppose that {\sf $D$-Survivable Star-NA} with edge-costs admits a polynomial time algorithm 
that computes a solution of cost at most $\rho(k) \tau({\cal F})$, where ${\cal F}$ is the family of tight bisets.
Then {\sf Survivable Star-NA} admits a polynomial time algorithm that computes a solution $I$ 
such that: 
\begin{itemize}
\item
For edge-costs,
$c(I) \leq \tau(h) \cdot \sum_{\ell=1}^k \frac{\rho(\ell)}{k-\ell+1}$, where $h$ is defined by (\ref{e:p}).
\item
For node-costs,
$c(I) \leq {\sf opt} \cdot \sum_{\ell=1}^k \rho(\ell) \cdot \min\left\{\frac{p^*}{k-\ell+1},1\right\}$.
\end{itemize}
\end{proposition}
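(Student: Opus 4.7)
The plan is iterative augmentation combined with LP-scaling: raise the $q$-connectivity by one unit in each of $k$ rounds, always invoking the assumed $D$-SNA approximation on the residual instance. Concretely, I set $G_0 = G$ and $I_0 = \emptyset$, and for each $\ell = 1, \ldots, k$ in turn I apply the $\rho(\ell)$-approximation for {\sf $a$-Based $D$-SNA} on $G_{\ell-1} = G + I_1 \cup \cdots \cup I_{\ell-1}$ with node-capacities $\{q_v\}$ and demand set $D_\ell = \{uv \in D : r_{uv} \geq \ell\}$, producing an increment $I_\ell \subseteq F \setminus (I_1 \cup \cdots \cup I_{\ell-1})$. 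A straightforward induction on $\ell$ gives $\lambda^q_{G_\ell}(u,v) \geq \min(r_{uv}, \ell)$ for every $uv \in D$, so $I = \bigcup_{\ell=1}^k I_\ell$ is feasible after the $k$-th iteration. By the hypothesis, $c(I_\ell) \leq \rho(\ell) \cdot \tau^*({\cal F}_\ell)$, where ${\cal F}_\ell$ is the family of bisets tight in the $D$-SNA subinstance at iteration $\ell$.

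For the edge-cost bound, I fix an optimal solution $x^*$ of the LP over ${\cal P}_h$, so $c^\top x^* = \tau^*(h)$. The core technical step is the scaling claim $\tau^*({\cal F}_\ell) \leq \tau^*(h)/(k-\ell+1)$, which I would prove by showing that $y := x^*/(k-\ell+1)$ is a feasible fractional cover of ${\cal F}_\ell$. Unpacking tightness, a biset $\hat{X} \in {\cal F}_\ell$ must satisfy $q(\Gamma(\hat{X})) + |\delta_{G_{\ell-1}}(\hat{X})| = \ell - 1$ and be separated by some demand $uv$ with $r_{uv} \geq \ell$; combined with the LP feasibility $x^*(\delta_F(\hat{X})) \geq h(\hat{X})$, the goal is to argue $h(\hat{X}) \geq k - \ell + 1$, so that $y(\delta_F(\hat{X})) \geq 1$ as required. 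Summing $c(I_\ell) \leq \rho(\ell) \tau^*(h)/(k-\ell+1)$ over $\ell$ then yields the claimed edge-cost bound.

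For the node-cost bound, I would bound $\tau^*({\cal F}_\ell)$ by comparing to the integral optimum ${\sf opt}$ via a carefully selected fractional solution derived from an optimal node-set $S^*$ with edge-image $F^* = \{av \in F : v \in S^*\}$. One candidate solution places weight $p_{\max}/(k-\ell+1)$ on every edge in $F^*$: using the same $h(\hat{X}) \geq k-\ell+1$ property together with the observation that, since each node of $V$ supports at most $p_{\max}$ parallel edges in $F$, at least $|\delta_{F^*}(\hat{X})|/p_{\max} \geq (k-\ell+1)/p_{\max}$ distinct nodes of $S^*$ are incident to $\delta_F(\hat{X})$, this solution is feasible for ${\cal F}_\ell$ and has cost at most $(p_{\max}/(k-\ell+1)) \cdot {\sf opt}$; the alternative solution (weight $1$ on $F^*$) is feasible by integrality of $S^*$ and costs at most ${\sf opt}$. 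Taking the better of the two yields the $\min\{p_{\max}/(k-\ell+1),\,1\}$ factor, and summing over $\ell$ produces the second bound. The principal technical obstacle, in both cases, is verifying $h(\hat{X}) \geq k-\ell+1$ for every $\hat{X} \in {\cal F}_\ell$: the iteration invariants directly furnish only $r(\hat{X}) \geq \ell$, so one must exploit the {\sf $a$-Based} star-structure of $F$ together with the accumulated cross-edges $|\delta_{I_1 \cup \cdots \cup I_{\ell-1}}(\hat{X})|$ to close the gap between $\ell$ and $k$.
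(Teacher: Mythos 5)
Your bottom-up iteration scheme with $D_\ell=\{uv\in D: r_{uv}\geq\ell\}$ does not support the key scaling claim, and the gap you flag at the end is not closable under this scheme. The invariant $\lambda^q_{G_{\ell-1}}(u,v)\geq\min(r_{uv},\ell-1)$ only yields, for a tight biset $\hat X\in{\cal F}_\ell$ separating a demand $uv$ with $r_{uv}\geq\ell$, that $q(\Gamma(\hat X))+|\delta_G(\hat X)|\leq\ell-1$, hence $h(\hat X)\geq r(\hat X)-(\ell-1)\geq 1$. Nothing forces $h(\hat X)\geq k-\ell+1$: take a demand with $r_{uv}=\ell$ whose initial $q$-connectivity in $G$ is already $\ell-1$; it is dormant through iterations $1,\ldots,\ell-1$ (so no accumulated cross-edges help), it becomes active at iteration $\ell$, and its tight bisets have $h(\hat X)=1$, which is far below $k-\ell+1$ once $\ell<k$. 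So $x^*/(k-\ell+1)$ need not be a feasible fractional cover of ${\cal F}_\ell$, and the edge-cost summation breaks down. The ``$a$-based'' structure and the edges in $I_1\cup\cdots\cup I_{\ell-1}$ cannot repair this, because those edges may simply not cross $\hat X$.

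The fix is to change which demands are attacked at iteration $\ell$. Rather than raising connectivity from the bottom, track the \emph{deficiency}: maintain the invariant that after iteration $\ell$, $\lambda^q_{G+I}(s,v)\geq r_{sv}-k+\ell$ for all $sv\in D$, and at iteration $\ell$ handle exactly $D_\ell=\{sv\in D:\lambda^q_{G+I}(s,v)=r_{sv}-k+\ell-1\}$. Then every $\hat X\in{\cal F}_\ell$ separates some $sv\in D_\ell$ with $q(\Gamma(\hat X))+|\delta_{G+I}(\hat X)|=r_{sv}-k+\ell-1$, which forces $q(\Gamma(\hat X))+|\delta_G(\hat X)|\leq r_{sv}-k+\ell-1$ and therefore $h(\hat X)\geq r(\hat X)-(r_{sv}-k+\ell-1)\geq k-\ell+1$, exactly what the scaling argument needs. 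The per-round effective requirement is still at most $\ell$, since $\lambda^q(s,v)=r_{sv}-k+\ell-1\leq\ell-1$ on $D_\ell$, so the $\rho(\ell)$ factor is preserved. For node-costs, your construction of a fractional cover from an optimal node set $S^*$ again invokes the same false deficiency bound; the paper instead converts node-costs to edge-costs via $c'(av)=c(v)$, observes ${\sf opt}\leq{\sf opt}'\leq p_{\max}\cdot{\sf opt}$, and uses that inclusion-minimal $a$-based covers contain no parallel edges to get $c(I_\ell)=c'(I_\ell)$; combining the scaled edge-cost bound with the direct bound $c(I_\ell)\leq\rho(\ell)\cdot{\sf opt}$ then gives the $\min\bigl\{p_{\max}/(k-\ell+1),\,1\bigr\}$ factor.
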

\begin{proof}
We start with the edge-costs case. 
Consider the following sequential algorithm. Start with $I=\emptyset$.
At ite\-ra\-tion $\ell=1, \ldots,k$, add to $I$ and remove from $F$ an edge-set $I_\ell \subseteq F$ 
that increases by $1$ the $q$-connectivity of $G+I$ on the set of demand edges
$D_\ell=\{sv:\lambda_{G+I}^q(s,v)=r(s,v)-k+\ell-1, sv \in D\}$, 
by covering the corresponding biset-family ${\cal F}_\ell$ using the $\rho$-approximation algorithm.
After ite\-ration $\ell$, we have $\lambda_{G+I}^q(s,v) \geq r(s,v)-k+\ell$ for all $sv \in D$.
Consequently, after $k$ iterations $\lambda_{G+I}^q(s,v) \geq r(s,v)$ holds for all $sv \in D$,
thus the computed solution is feasible. 
The approximation ratio follows from the following two observations. 
\begin{itemize}
\item[(i)]
$c(I_\ell) \leq \rho(\ell) \cdot \tau({\cal F}_\ell)$.
This is so since $\lambda(s,v) \leq \ell-1$ for every $sv \in D_\ell$, 
hence the maximum requirement at iteration $\ell$ is at most $\ell$. 
\item[(ii)]
$\tau({\cal F}_\ell) \leq \frac{\tau(h)}{k-\ell+1}$.
To see this, note that if $\hat{Y} \in {\cal F}_\ell$ and $x \in {\cal P}_h$
then $x(\delta(\hat{Y})) \geq k-\ell+1$, by Menger's Theorem. 
Thus $x/(k-\ell+1)$ is a feasible solution for the LP-relaxation for edge-covering ${\cal F}_\ell$,
of value $c \cdot x/(k-\ell+1)$. 
\end{itemize}
Consequently, $c(I) = 
\sum_{\ell=1}^k c(I_\ell) \leq \sum_{\ell=1}^k \rho(\ell) \cdot \frac{\tau(h)}{k-\ell+1}=
\tau(h) \cdot \sum_{\ell=1}^k \frac{\rho(\ell)}{k-\ell+1}$.

Now let us consider the case of node-costs. 
Then we convert node-costs into edge-costs by assigning to every edge $e=av$ the cost $c'(e)=c(v)$. 
Let ${\sf opt}'$ denote the optimal solution value of the edge-costs instance obtained.
Clearly, ${\sf opt} \leq {\sf opt}' \leq p^* \cdot {\sf opt}$.
Note that any inclusion minimal solution to a {\sf $D$-Survivable NA} instance has no parallel edges.
This implies that $c(I_\ell) \leq \rho(\ell) \cdot {\sf opt}$ and that $c(I_\ell)=c'(I_\ell)$.
The latter implies 
$c(I_\ell)=c'(I_\ell) \leq \rho(\ell) \cdot \frac {{\sf opt}'}{k-\ell+1} \leq 
\rho(\ell) \cdot {\sf opt} \cdot \frac {p^*}{k-\ell+1}$,
and the statement for the node-costs case follows. 
\qed
\end{proof}

In the next section we prove the following theorem, 
that together with Proposition~\ref{p:scale} finishes the proof of Theorem~\ref{t:und}.

\begin{theorem} \label{t:D}
Undirected {\sf $D$-Survivable Star-NA} with edge-costs admits 
a polynomi\-al time algorithm that computes a solution $I$ of cost 
$\tau({\cal F}) \cdot O(\ln (k/q^*))$.
\end{theorem}

\section{Proof of Theorem~\ref{t:D}}

Recall that {\sf $D$-Survivable NA} reduces to {\sf Biset-Family Edge-Cover} with ${\cal F}$ 
being the family of tight bisets; in the case of rooted requirements, when $D$ is a star with center $s$,
it is sufficient to cover the biset-family ${\cal F}^s=\{\hat{X} \in {\cal F}: s \in V \setminus X^+\}$
Biset-families arising from {\sf Survivable NA} instances have some special properties,
that are summarized in the following definitions.

\begin{definition}
The intersection and the union of two bisets $\hat{X},\hat{Y}$ is defined by
$\hat{X} \cap \hat{Y} = (X \cap Y, X^+ \cap Y^+)$ and 
$\hat{X} \cup \hat{Y} = (X \cup Y,X^+ \cup Y^+)$. 
The biset $\hat{X} \setminus \hat{Y}$ is defined by 
$\hat{X} \setminus \hat{Y}=(X \setminus Y^+,X^+ \setminus Y)$.
We write $\hat{X} \subseteq \hat{Y}$ and say that 
{\em $\hat{Y}$ contains $\hat{X}$} if $X \subseteq Y$ and $X^+ \subseteq Y^+$.
Let ${\cal C}_{\cal F}$ denote the inclusion-minimal bisets in ${\cal F}$.
\end{definition}

\begin{definition} \label{d:uncrossable}
Two bisets $\hat{X},\hat{Y}$ covered by an edge-set $D$ are {\em $D$-independent}
if for any $xx',yy' \in D$ such that $xx'$ covers $\hat{X}$ and $yy'$ covers $\hat{Y}$,
$\{x,x'\} \cap \Gamma(\hat{Y}) \neq \emptyset$ or $\{y,y'\} \cap \Gamma(\hat{X}) \neq \emptyset$;
otherwise, $\hat{X},\hat{Y}$ are {\em $D$-dependent}.
We say that a biset family ${\cal F}$ is {\em $D$-uncrossable} if $D$ covers ${\cal F}$ and if
for any $D$-dependent $\hat{X},\hat{Y} \in {\cal F}$ the following holds:
\begin{equation} \label{e:uncross}
\hat{X} \cap \hat{Y},\hat{X} \cup \hat{Y} \in {\cal F} \mbox{ or }
\hat{X} \setminus \hat{Y},\hat{Y} \setminus \hat{X} \in {\cal F} \ .
\end{equation}
Similarly, given a set $T \subseteq V$ of terminals, we say that $\hat{X}, \hat{Y}$ are {\em $T$-independent}
if $X \cap T \subseteq \Gamma(\hat{Y})$ or if $Y \cap T \subseteq \Gamma(\hat{X})$, and 
$\hat{X},\hat{Y}$ are {\em $T$-dependent} otherwise.
We say that ${\cal F}$ is $T$-uncrossable if $T$ covers the set-family of the inner parts of ${\cal F}$,
and if (\ref{e:uncross}) 
holds for any $T$-dependent $\hat{X},\hat{Y} \in {\cal F}$.
\end{definition}

A biset-family ${\cal F}$ is symmetric if $\hat{X} \in {\cal F}$ implies 
$(V \setminus X^+,V \setminus X) \in {\cal F}$.
We will use the the following statement, that was implicitly proved in \cite{N-aug}.

\begin{lemma} [\cite{N-aug}]
The family ${\cal F}$ of tight bisets is symmetric and $D$-uncrossable; 
if $D$ is a star with 
leaf-set $T$ then $\{\hat{X} \in {\cal F}: s \notin X^+\}$ is $T$-uncrossable. 
\end{lemma}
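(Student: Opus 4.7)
The plan is to derive both statements from the decomposition $h(\hat{X}) = r(\hat{X}) - q(\Gamma(\hat{X})) - |\delta_G(\hat{X})|$ of the deficiency function together with three matching sub/posi-modular inequalities. The crucial preliminary observation is that $h \leq 1$ on every biset in the {\sf $D$-SNA} setting: by Menger's theorem $q(\Gamma(\hat{X})) + |\delta_G(\hat{X})| \geq \lambda^q_G(u,v)$ for every demand edge $uv$ covering $\hat{X}$, and $r_{uv} = \lambda^q_G(u,v)+1$. Hence, to show that tight $\hat{X},\hat{Y}$ have tight intersection and union (or tight set-differences), it is enough to establish $h(\hat{X}) + h(\hat{Y}) \leq h(\hat{A})+h(\hat{B})$ for the appropriate pair $(\hat{A},\hat{B}) \in \{(\hat{X}\cap\hat{Y},\hat{X}\cup\hat{Y}),(\hat{X}\setminus\hat{Y},\hat{Y}\setminus\hat{X})\}$, since the LHS equals $2$ and each summand on the right is an integer at most $1$.

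The three inequalities combine as follows. The standard submodularity $|\delta_G(\hat{X})|+|\delta_G(\hat{Y})| \geq |\delta_G(\hat{X}\cap\hat{Y})|+|\delta_G(\hat{X}\cup\hat{Y})|$ and $q(\Gamma(\hat{X}))+q(\Gamma(\hat{Y})) \geq q(\Gamma(\hat{X}\cap\hat{Y}))+q(\Gamma(\hat{X}\cup\hat{Y}))$, together with the analogous posimodular versions for the set-difference pair, hold for arbitrary bisets (the $\Gamma$-inequality follows from the inclusions $\Gamma(\hat{X}\cap\hat{Y}) \cup \Gamma(\hat{X}\cup\hat{Y}) \subseteq \Gamma(\hat{X})\cup\Gamma(\hat{Y})$ with the correct multiplicities). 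The subtle step is to produce a matching inequality for $r$ on $D$-dependent pairs. I would fix edges $xx' \in \delta_D(\hat{X})$ and $yy' \in \delta_D(\hat{Y})$ achieving the maxima in $r(\hat{X}),r(\hat{Y})$; by $D$-dependence, $\{x,x'\} \cap \Gamma(\hat{Y}) = \emptyset$ and $\{y,y'\} \cap \Gamma(\hat{X}) = \emptyset$, so each endpoint of $xx'$ lies in $Y \cup (V\setminus Y^+)$ and each endpoint of $yy'$ in $X \cup (V\setminus X^+)$. A short case analysis on which of these regions $x$ and $x'$ fall in shows that $xx'$ covers exactly one of the four bisets $\hat{X}\cap\hat{Y}, \hat{X}\cup\hat{Y}, \hat{X}\setminus\hat{Y}, \hat{Y}\setminus\hat{X}$; running the same check on $yy'$ shows it covers the ``partner'' biset from the same pair. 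This yields $r(\hat{X})+r(\hat{Y}) \leq r(\hat{A})+r(\hat{B})$ for that same pair, and adding the three inequalities proves the first claim. The main obstacle is ensuring that $r$, $q\circ \Gamma$, and $|\delta_G|$ all uncross along the \emph{same} pair; this is automatic, since the sub/posi-modularity inequalities hold simultaneously for both pairs, and only $r$ forces a choice.

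For the rooted case, assume $D$ is the star centered at $s$ with leaf-set $T$, and restrict attention to $\hat{X}$ with $s \notin X^+$. A demand edge $sv$ covers such $\hat{X}$ iff $v \in X \cap T$, so $r(\hat{X}) = \max\{d_v : v \in X \cap T\}$, and $T$-dependence supplies witnesses $v \in (X\cap T)\setminus \Gamma(\hat{Y})$ and $w \in (Y\cap T)\setminus \Gamma(\hat{X})$ that play the role previously played by the edge-endpoints; the same case analysis applied to the pair $(v,w)$ gives the required $r$-inequality. Finally, $s \notin X^+$ is preserved by all four operations, since $(\hat{X}\cap\hat{Y})^+ = X^+\cap Y^+$, $(\hat{X}\cup\hat{Y})^+ = X^+\cup Y^+$, $(\hat{X}\setminus\hat{Y})^+ = X^+\setminus Y \subseteq X^+$, and $(\hat{Y}\setminus\hat{X})^+ \subseteq Y^+$ all avoid $s$. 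Hence the family $\{\hat{X} \in {\cal F} : s \notin X^+\}$ is $T$-uncrossable.
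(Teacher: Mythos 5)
The paper does not actually prove this lemma---it cites it from [N-aug] and gives no argument---so what you have written is a blind reconstruction of the proof in [N-aug]. The overall strategy is plausible and in the right spirit: decompose $h(\hat{X})$ into $r(\hat{X})$ minus the biset-capacity $\nu(\hat{X}) := q(\Gamma(\hat{X}))+|\delta_G(\hat{X})|$, observe $h\le 1$ in the $D$-{\sf SNA} setting, use sub/posi-modularity of $\nu$, and combine with a matching inequality for $r$ to force both members of the appropriate pair to have $h=1$. The $h\le 1$ observation and the $\nu$ inequalities are correct.

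However, the $r$-inequality step has a genuine gap, in two places. First, you fix $xx'$ and $yy'$ \emph{achieving the maxima} in $r(\hat{X}),r(\hat{Y})$ and then assert ``by $D$-dependence, $\{x,x'\}\cap\Gamma(\hat{Y})=\emptyset$ and $\{y,y'\}\cap\Gamma(\hat{X})=\emptyset$.'' But $D$-dependence, as defined, only guarantees the \emph{existence} of \emph{some} witness pair in $\delta_D(\hat{X})\times\delta_D(\hat{Y})$ with disjoint endpoints from the other boundary; it does not say the maximizing edges are such a pair. If the maximizing edges meet the other biset's boundary, your case analysis does not apply to them, while if you run the case analysis on a genuine witness $(xx',yy')$, you only obtain $r(\hat{A})\ge r_{xx'}$ and $r(\hat{B})\ge r_{yy'}$ with $r_{xx'}\le r(\hat{X})$, $r_{yy'}\le r(\hat{Y})$ possibly strict, which is not enough to conclude $r(\hat{A})+r(\hat{B})\ge r(\hat{X})+r(\hat{Y})$. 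Second, the claim that ``$xx'$ covers exactly one of the four bisets'' is false: one endpoint of $xx'$ is in $X$ and in $Y$ or $V\setminus Y^+$, the other is in $V\setminus X^+$ and in $Y$ or $V\setminus Y^+$, and in every one of the four resulting position patterns $xx'$ covers exactly \emph{two} of $\hat{X}\cap\hat{Y},\hat{X}\cup\hat{Y},\hat{X}\setminus\hat{Y},\hat{Y}\setminus\hat{X}$. In particular, when $xx'$ has one endpoint in $X\cap Y$ and the other outside $X^+\cup Y^+$ (so it covers $\hat{X}\cap\hat{Y}$ and $\hat{X}\cup\hat{Y}$ only), while $yy'$ has one endpoint in $Y\setminus X^+$ and the other in $X\setminus Y^+$ (so it covers $\hat{X}\setminus\hat{Y}$ and $\hat{Y}\setminus\hat{X}$ only), there is \emph{no} pair of which $xx'$ covers one member and $yy'$ the other, and your argument produces no usable pair at all. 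This configuration is compatible with $D$-dependence, so it must be handled; the reconstruction as written does not handle it. The same two gaps recur in the $T$-uncrossable/rooted part, where the terminals $v,w$ supplied by $T$-dependence likewise need not realize the maxima defining $r(\hat{X}),r(\hat{Y})$.
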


For a biset-family ${\cal C}$ let 
$\gamma_{\cal C}=\max\{|\Gamma(\hat{C})|: \hat{C} \in {\cal C}\}$.
Note that if ${\cal F}$ is the family of tight bisets then $\gamma_{\cal F} \leq (k-1)/q^*$.
Given an instance of {\sf Biset-Family Edge-Cover}, we will assume that the family ${\cal C}$ 
of the inclusion members of ${\cal F}$ can be computed in polynomial time.
We note that for ${\cal F}$ being the family of tight bisets,
this step can be implemented in polynomial time, c.f. \cite{N-aug}. 
Under this assumption, we prove the following generalization of Theorem~\ref{t:D}.

\begin{theorem} \label{t:main}
For edge/node-costs, {\sf Biset-Family Edge-Cover} with $F$ being a star admits 
a polynomial time algorithm that computes a cover $I$ of ${\cal F}$ such that:
\begin{itemize}
\item[{\em (i)}]
$c(I) \leq H\left({(4\gamma_{\cal C}+1)}^2\right) \cdot \tau({\cal F})$  
if ${\cal F}$ is symmetric and $D$-uncrossable.
\item[{\em (ii)}]
$c(I) \leq H(2\gamma_{\cal C}+1) \cdot \tau({\cal F})$ if ${\cal F}$ is $T$-uncrossable 
and $a \in V \setminus X^+$ for all~$\hat{X} \in {\cal F}$.
\end{itemize}
\end{theorem}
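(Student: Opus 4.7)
The plan is to cast $a$-Based Biset-Family Edge-Cover as a weighted set cover whose ``elements'' are the inclusion-minimal bisets ${\cal C}_{\cal F}$ and whose ``sets'' are the edges of $F$ (for edge-costs) or the neighbours of $a$ (for node-costs), and then to apply the standard LP-greedy bound $H(\Delta)\cdot\tau^*({\cal F})$, where $\Delta$ is the maximum number of bisets in ${\cal C}_{\cal F}$ that a single edge (resp.\ a single $\delta_F(v)$) can cover. The LP inequality $x(\delta_F(\hat C))\geq 1$ for every $\hat C\in{\cal C}_{\cal F}$ and every $x\in{\cal P}_h$, with $h$ the $\{0,1\}$-indicator of ${\cal F}$, makes the greedy on the progress function $g(I)=|\{\hat C\in{\cal C}_{\cal F}:\delta_I(\hat C)\neq\emptyset\}|$ match the LP bound via classical dual fitting; in the node-cost case Lemma~\ref{l:star} supplies the submodularity of $g$ pulled back from $F$ to $V$.

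\textbf{Case (ii) --- core uncrossing lemma.} Under the hypothesis $a\in V\setminus X^+$ for every $\hat X\in{\cal F}$, an edge $av$ covers $\hat C$ exactly when $v\in C$, so it suffices to bound $N_v=|\{\hat C\in{\cal C}_{\cal F}:v\in C\}|$ by $2\gamma_{\cal C}+1$. Take $\hat C_1,\ldots,\hat C_m\in{\cal C}_{\cal F}$ with $v\in C_i$ for all $i$. If some pair $\hat C_i,\hat C_j$ were $T$-dependent, then by uncrossability either $\hat C_i\cap\hat C_j\in{\cal F}$ or $\hat C_i\setminus\hat C_j\in{\cal F}$; since $v\in C_i\cap C_j$, both alternatives exhibit a proper sub-biset of $\hat C_i$ in ${\cal F}$, contradicting the inclusion-minimality of $\hat C_i$. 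Hence every pair is $T$-independent: $C_i\cap T\subseteq\Gamma(\hat C_j)$ or $C_j\cap T\subseteq\Gamma(\hat C_i)$. Pick a terminal witness $t_i\in C_i\cap T$ (nonempty because $T$ covers the inner parts of ${\cal F}$); a pigeonhole chain on the boundaries $\Gamma(\hat C_i)$, each of size $\leq\gamma_{\cal C}$, then collapses the family of $T$-independence conditions into the desired bound $m\leq 2\gamma_{\cal C}+1$.

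\textbf{Case (i) --- reduction to case (ii).} In the symmetric $D$-uncrossable setting, partition ${\cal C}_{\cal F}$ into $A=\{\hat X:a\in V\setminus X^+\}$ and $B=\{\hat X:a\in X\}$ (bisets with $a\in\Gamma(\hat X)$ are covered by no $a$-incident edge and can be discarded, and symmetry of ${\cal F}$ pairs $A$ with $B$). On each class the case~(ii) argument applies, but two features inflate the constant. First, $D$-dependence is strictly weaker than $T$-dependence: a dependent pair now requires a witness edge of $D$ on \emph{each side} of $\hat C_i,\hat C_j$, rather than a single terminal lying inside both inner parts. Second, the natural substitute for the terminal set is the endpoint-set of the demand edges in $\delta_D(\hat C_i)$, and each such edge contributes two endpoints. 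Tracking the two sides of $D$-dependence doubles the effective boundary to $4\gamma_{\cal C}+1$, and then combining the contributions from each endpoint of a demand edge yields the squared bound $(4\gamma_{\cal C}+1)^2$.

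\textbf{Main obstacle.} The heart of the argument is the uncrossing count: passing from pairwise $T$-independence to a bound that is \emph{linear} in $\gamma_{\cal C}$ requires both a careful choice of terminal witnesses and a tight pigeonhole through the $\Gamma$-boundaries, and it is this step that pins down the specific constant $2\gamma_{\cal C}+1$. The reduction in Step~3 then amplifies this under the weaker $D$-uncrossability hypothesis, and extracting the precise $(4\gamma_{\cal C}+1)^2$ constant --- rather than a naive $(2\gamma_{\cal C}+1)^2$ --- demands tracking two independent $D$-witnesses per dependent pair simultaneously. Once these combinatorial lemmas are secured, the LP-greedy bound in Step~1 is essentially automatic.
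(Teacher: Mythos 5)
Your overall strategy agrees with the paper's: reduce \textsf{$a$-Based Biset-Family Edge-Cover} to a transversal/set-cover problem on the hypergraph of inner parts of the inclusion-minimal bisets ${\cal C}={\cal C}_{\cal F}$, bound the maximum degree of that hypergraph, and invoke the greedy $H(\Delta)$-approximation. The paper implements exactly this via Lemma~\ref{l:ratio}, Lemma~\ref{l:feasible}, and Theorem~\ref{t:2}. Where you diverge is that the paper treats the degree bound (Theorem~\ref{t:2}) as a black box imported from \cite{N-aug}, whereas you attempt to re-derive it from scratch --- and that re-derivation is where the gaps are.

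For case (ii), your opening move is correct and matches the spirit of \cite{N-aug}: for any $v$, the minimal bisets whose inner part contains $v$ must be pairwise $T$-independent, because a $T$-dependent pair would uncross into a proper sub-biset of one of them lying in ${\cal F}$. But the step from pairwise $T$-independence to $m\leq 2\gamma_{\cal C}+1$ is asserted, not proved. Pairwise $T$-independence only says that for each pair $\{i,j\}$ one of $C_i\cap T\subseteq\Gamma(\hat C_j)$ or $C_j\cap T\subseteq\Gamma(\hat C_i)$ holds; this is a tournament structure, and a naive pigeonhole fails because the witnesses $C_i\cap T$ need not be distinct (many of them can be the same singleton sitting inside a single boundary $\Gamma(\hat C_j)$). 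Getting a bound \emph{linear} in $\gamma_{\cal C}$ with the exact constant $2\gamma_{\cal C}+1$ requires the finer combinatorial argument of \cite{N-aug}, which you name (``a careful choice of terminal witnesses and a tight pigeonhole'') but do not supply.

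Case (i) is weaker still: the claimed reduction from the $D$-uncrossable case to the $T$-uncrossable case is not a reduction but a heuristic for why the constant is $(4\gamma_{\cal C}+1)^2$ rather than $2\gamma_{\cal C}+1$. Splitting ${\cal C}_{\cal F}$ into $A=\{\hat X : a\in V\setminus X^+\}$ and $B=\{\hat X:a\in X\}$ is a reasonable start, but $D$-independence is a genuinely weaker relation than $T$-independence (a dependent pair now requires a witness demand edge on each side), and the phrase ``tracking the two sides\ldots doubles the effective boundary\ldots and then\ldots yields the squared bound'' does not constitute an argument that the degree is at most $(4\gamma_{\cal C}+1)^2$. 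In the paper this case is a separate, non-trivial statement of Theorem~\ref{t:2}(i), again cited from \cite{N-aug}.

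Two smaller points. First, your set-cover framing quietly assumes that covering all minimal bisets covers all of ${\cal F}$ by an $a$-star; this is true, but not because an edge covering $\hat C$ automatically covers every $\hat X\supseteq\hat C$ (it does not). What saves you is precisely the $a$-based structure: a transversal of the inner parts of ${\cal C}$, joined to $a$, covers every $\hat X\in{\cal F}$ --- this is the content of Lemma~\ref{l:feasible}, and in case~(i) it uses symmetry to handle bisets with $a\in X$. You should state this explicitly rather than leave it implicit. Second, your LP-to-LP comparison (fractional transversal value bounded by $\tau^*({\cal F})$) is fine for edge-costs once one sets $c_v=\min_{e\in\delta_F(v)}c_e$; the paper does this in step~1 of its algorithm. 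In summary: the scaffolding you built is the right one, but the load-bearing degree bounds are not proved, and without either proving them or explicitly citing \cite{N-aug} the argument is incomplete.
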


In the rest of this section we prove Theorem~\ref{t:main}. 

\begin{definition}
A node set $U \subseteq V$ is a {${\cal C}$-transversal} of a hypergraph (set-family) 
${\cal C}$ on $V$ if $U$ intersects every set in ${\cal C}$;
if ${\cal C}$ is a biset-family then $U$ should intersect the inner part of every member of ${\cal C}$.
Given costs $\{c_v:v \in V\}$, let $t^*({\cal C})$ denote the minimum value of a fractional 
${\cal C}$-transversal, namely:
$$t^*({\cal C})=\min\{\sum_{v \in V} c_v x_v: 
x(C) \geq 1 \ \ \forall \hat{C} \in {\cal C}, \ x(v) \geq 0 \ \forall v \in V\} \ .$$
\end{definition}

In \cite{N-aug}, the following is proved.

\begin{theorem} [\cite{N-aug}] \label{t:2}
let ${\cal C}$ be the family  of the inclusion members of a biset family ${\cal F}$.
Then the maximum degree in the hypergraph $\{C:\hat{C} \in {\cal C}\}$ is at most:
\begin{itemize}
\item[{\em (i)}]
${(4\gamma_{\cal C}+1)}^2$ if ${\cal F}$ is $D$-uncrossable.
\item[{\em (ii)}]
$2\gamma_{\cal C}+1$ if ${\cal F}$ is $T$-uncrossable.
\end{itemize}
\end{theorem}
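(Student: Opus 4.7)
The plan is to fix an arbitrary $v \in V$ and bound $|P(v)|$, where $P(v)=\{\hat{C} \in {\cal C}: v \in C\}$; since the degree of $v$ in the hypergraph $\{C : \hat{C} \in {\cal C}\}$ equals $|P(v)|$, this yields both claimed bounds. The common first step, used in both parts, is that any two distinct $\hat{C}_1,\hat{C}_2 \in P(v)$ must be \emph{independent} (in the $T$- or $D$-sense, as appropriate). Indeed, if they were dependent then by (\ref{e:uncross}) one of the pairs $\{\hat{C}_1 \cap \hat{C}_2,\hat{C}_1 \cup \hat{C}_2\}$ or $\{\hat{C}_1 \setminus \hat{C}_2,\hat{C}_2 \setminus \hat{C}_1\}$ would lie in ${\cal F}$. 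Because $v \in C_1 \cap C_2$, the inner part $C_1 \setminus C_2^+$ of $\hat{C}_1 \setminus \hat{C}_2$ misses $v$, so this biset is a strict sub-biset of $\hat{C}_1$, ruling out the second pair by inclusion-minimality; in the first pair, $\hat{C}_1 \cap \hat{C}_2 \subseteq \hat{C}_1$ forces $\hat{C}_1 \cap \hat{C}_2 = \hat{C}_1$ (and symmetrically $=\hat{C}_2$), giving $\hat{C}_1=\hat{C}_2$, a contradiction. Hence the members of $P(v)$ are pairwise independent.

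For part (ii), since $T$ covers the inner parts of ${\cal F}$, for each $\hat{C} \in P(v)$ pick a representative $t_{\hat{C}} \in C \cap T$. These representatives are distinct: if some $t$ lay in $C_1 \cap C_2 \cap T$ with $\hat{C}_1 \neq \hat{C}_2$, then $T$-independence would give, say, $C_1 \cap T \subseteq \Gamma(\hat{C}_2)$, placing $t$ both in $C_2$ and in $\Gamma(\hat{C}_2)=C_2^+ \setminus C_2$, a contradiction. Form a digraph on $P(v)$ by declaring $\hat{C} \to \hat{C}'$ iff $t_{\hat{C}} \in \Gamma(\hat{C}')$. $T$-independence furnishes at least one arc per unordered pair, while distinctness of the $t_{\hat{C}}$'s bounds the in-degree of every $\hat{C}'$ by $|\Gamma(\hat{C}')| \leq \gamma_{\cal C}$. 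Double counting gives $\binom{|P(v)|}{2} \leq |P(v)| \cdot \gamma_{\cal C}$, hence $|P(v)| \leq 2\gamma_{\cal C}+1$.

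For part (i), the strategy is a two-coordinate analog of (ii). First, $D$-independence of distinct $\hat{C}_1,\hat{C}_2 \in P(v)$ can be strengthened to the dichotomy: either \emph{every} edge of $D$ covering $\hat{C}_1$ has an endpoint in $\Gamma(\hat{C}_2)$, or \emph{every} edge of $D$ covering $\hat{C}_2$ has an endpoint in $\Gamma(\hat{C}_1)$; otherwise a pair of ``bad'' edges, one per biset, would contradict the universal quantifier in the definition of $D$-independence. For each $\hat{C} \in P(v)$, pick a covering edge $e_{\hat{C}}=u_{\hat{C}}w_{\hat{C}} \in D$ with $u_{\hat{C}} \in C$ and $w_{\hat{C}} \in V \setminus C^+$; the dichotomy forces, for any two distinct members of $P(v)$, at least one of the four incidences $u_i \in \Gamma(\hat{C}_j)$ or $w_i \in \Gamma(\hat{C}_j)$ (with $\{i,j\}=\{1,2\}$). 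The plan is to run the tournament argument of part~(ii) separately on the $u$- and on the $w$-coordinate, each pass contributing a factor of $4\gamma_{\cal C}+1$ (with the factor $4$ absorbing both the inner/outer and first/second-argument case distinctions), producing a product bound $|P(v)| \leq (4\gamma_{\cal C}+1)^2$.

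The main obstacle is executing part (i) cleanly: unlike in (ii), the single-coordinate representatives $u_{\hat{C}}$ (or $w_{\hat{C}}$) need not be distinct across $P(v)$, so the naive in-degree bound of the tournament argument fails. The remedy, following the structural analysis of $D$-uncrossable families in \cite{N-aug}, is to choose the covering edges $e_{\hat{C}}$ canonically and then to stratify $P(v)$ by the combined incidence pattern of the pair $(u_{\hat{C}},w_{\hat{C}})$ against the boundaries of the other bisets. Controlling how much the $u$- and $w$-coordinates can collapse under this stratification—so that each coordinate yields at most $4\gamma_{\cal C}+1$ equivalence classes—is the combinatorial heart of the proof, and the step most in need of care.
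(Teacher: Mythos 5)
First, note that the paper itself gives no proof of Theorem~\ref{t:2}: it is imported from \cite{N-aug}. So the only question is whether your reconstruction stands on its own. Your common reduction (pairwise independence of the minimal bisets containing a fixed $v$, via inclusion-minimality and (\ref{e:uncross})) is correct, and your proof of part~(ii) is complete and clean: $T$-independence forces $C_1\cap C_2\cap T=\emptyset$, so the representatives $t_{\hat{C}}$ are distinct, the in-degree of each $\hat{C}'$ in your digraph is at most $|\Gamma(\hat{C}')|\le\gamma_{\cal C}$, and double counting gives $|P(v)|\le 2\gamma_{\cal C}+1$.

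Part~(i), however, contains a genuine gap, and you have named it yourself. The dichotomy you derive from $D$-independence is correct (if some edge covering $\hat{C}_1$ misses $\Gamma(\hat{C}_2)$ and some edge covering $\hat{C}_2$ misses $\Gamma(\hat{C}_1)$, that pair of edges violates the definition), but everything after that is a plan rather than an argument. The entire force of the tournament bound in part~(ii) comes from the injectivity of the representative map, which you prove there from $T$-independence; in part~(i) the endpoints $u_{\hat{C}},w_{\hat{C}}$ of the chosen demand edges can coincide across many members of $P(v)$, and nothing in $D$-independence rules this out, so the in-degree of a biset $\hat{C}'$ is no longer bounded by $|\Gamma(\hat{C}')|$. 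Your proposed fix --- choose the edges ``canonically'' and ``stratify $P(v)$ by the combined incidence pattern'' so that ``each coordinate yields at most $4\gamma_{\cal C}+1$ equivalence classes'' --- is exactly the statement that needs proof, and you give no construction of the stratification, no argument that the classes are few, and no argument that within a class the second coordinate behaves injectively. It is also not explained how two ``passes'' of a tournament argument compose multiplicatively into $(4\gamma_{\cal C}+1)^2$: a product bound requires a partition into at most $4\gamma_{\cal C}+1$ classes each of size at most $4\gamma_{\cal C}+1$, and neither factor is established. As written, part~(i) is unproved; you would need to either carry out the stratification in full or cite the corresponding lemma of \cite{N-aug} directly, as the paper does.
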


Given a hypergraph $(V,{\cal C})$ with node-costs, the greedy algorithm computes in polynomial time 
a ${\cal C}$-transversal $U \subseteq V$ of cost  
$c(U) \leq H(\Delta({\cal C})) t^*({\cal C})$, where $\Delta({\cal C})$
is the maximum degree of the hypergraph (c.f. \cite{Lov}). 

\begin{lemma} \label{l:ratio}
If an edge-set $I$ covers a biset-family ${\cal F}$ then the set of endnodes of $I$
is a transversal of ${\cal F}$.
\end{lemma}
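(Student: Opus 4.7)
The plan is essentially to unfold the definitions and observe that the conclusion is immediate. Recall from Definition~\ref{d:uncrossable} (and the biset setup) that an edge $e = uv$ \emph{covers} a biset $\hat{X} = (X, X^+)$ precisely when one endnode lies in the inner part $X$ and the other lies in $V \setminus X^+$. On the other hand, a ${\cal C}$-transversal of a biset-family is, by the definition given, a set of nodes that intersects the inner part of every member of the family.

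So I would proceed as follows. Fix an arbitrary $\hat{X} \in {\cal F}$. Since $I$ covers ${\cal F}$, there is at least one edge $e = uv \in I$ covering $\hat{X}$; by definition of covering, we may assume $u \in X$ (and $v \in V \setminus X^+$). Let $U$ denote the set of endnodes of $I$. Then $u \in U \cap X$, so $U \cap X \neq \emptyset$. Since $\hat{X} \in {\cal F}$ was arbitrary, $U$ intersects the inner part of every biset in ${\cal F}$, hence $U$ is an ${\cal F}$-transversal, as required.

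There is no real obstacle here: the statement is essentially a one-line consequence of the two definitions (``edge covers biset'' versus ``node transversal of a biset-family''). The only thing worth flagging in the write-up is the slight asymmetry in the definition of covering --- one endnode is forced into the \emph{inner} part $X$, not merely into the outer part $X^+$ --- which is exactly what makes the endnode set a transversal in the biset sense and not just, say, an element of $X^+$. Given this, the proof can be stated in two or three sentences without any auxiliary lemmas.
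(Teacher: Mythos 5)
Your proof is correct and is exactly the intended one-line unfolding of the definitions; indeed the paper states Lemma~\ref{l:ratio} without proof, treating it as immediate, and your argument supplies precisely that missing trivial justification.
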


\begin{lemma} \label{l:feasible}
Let ${\cal F}$ be a biset family on $V$ and $I$ a star with center $a$ on a transversal 
$U \subseteq V$ of ${\cal F}$. Then $I$ covers ${\cal F}$ in each one of the following cases.
\begin{itemize}
\item[{\em (i)}]
${\cal F}$ is symmetric and $a \notin \Gamma(\hat{X})$ for all $\hat{X} \in {\cal F}$.
\item[{\em (ii)}]
$a \in V \setminus X^+$ for all $\hat{X} \in {\cal F}$.
\end{itemize}
\end{lemma}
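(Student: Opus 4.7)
The plan is to verify the covering property by a direct case analysis on the location of the star's center $a$ relative to a given biset $\hat{X}=(X,X^+) \in {\cal F}$. Since $I$ is a star with center $a$ on the leaf-set $U$, every edge of $I$ has the form $au$ with $u \in U$, and such an edge covers $\hat{X}$ exactly when one of $\{a,u\}$ lies in $X$ and the other in $V\setminus X^+$. So the task reduces to, given $\hat{X}$, exhibiting a suitable $u \in U$ depending on where $a$ sits.

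For part (ii), the hypothesis places $a \in V \setminus X^+$ for every $\hat{X} \in {\cal F}$. Since $U$ is a transversal of ${\cal F}$, we have $U \cap X \neq \emptyset$; choosing any $u \in U \cap X$ gives an edge $au \in I$ with $a \in V\setminus X^+$ and $u \in X$, which covers $\hat{X}$.

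For part (i), the hypothesis $a \notin \Gamma(\hat{X}) = X^+ \setminus X$ splits into two subcases. If $a \in V\setminus X^+$, the argument of (ii) applies verbatim using a transversal witness $u \in U \cap X$. If instead $a \in X$, I would invoke symmetry: the biset $\hat{X}' = (V\setminus X^+,\, V\setminus X)$ also belongs to ${\cal F}$, so $U$ meets its inner part $V\setminus X^+$; picking $u \in U \cap (V\setminus X^+)$, the edge $au$ has $a \in X$ and $u \in V\setminus X^+$, and hence covers $\hat{X}$.

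I do not anticipate a real obstacle here — the argument is just an unpacking of the definitions of ``cover'', ``transversal'', and ``symmetric biset family''. The one subtle point worth flagging is that the hypothesis $a \notin \Gamma(\hat{X})$ in part (i) is exactly what lets us place $a$ in one of the two ``useful'' regions ($X$ or $V\setminus X^+$), and symmetry is invoked solely to swap the role of these regions when $a$ lies in the inner part.
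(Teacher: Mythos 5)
Your proof is correct and follows essentially the same argument as the paper: split on whether $a \in V\setminus X^+$ or $a \in X$, use the transversal property directly in the first case, and invoke symmetry to move to the complementary biset in the second. Your version is a bit more explicit about where each hypothesis is used, but the underlying reasoning is identical.
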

\begin{proof}
Let $\hat{X} \in {\cal F}$. Then $a \in X$ or $a \in V \setminus X^+$.
If $a \in V \setminus X^+$, then since $U$ is a transversal of ${\cal C}$, there is $u \in U \cap X$. 
If $a \in X$, then if ${\cal F}$ is symmetric, then there $u \in U \cap (V \setminus X^+)$.
In both cases, there is an edge $au \in I$, and this edge covers $\hat{X}$. 
\qed
\end{proof}

The algorithm as in Theorem~\ref{t:main}, 
for both edge-costs and node-costs is as follows, 
where in the case of node-costs we may assume that the cost of $a$ is zero. 

\begin{center} 
\fbox{
\begin{minipage}{0.960\textwidth}
\begin{enumerate}
\item
For every $v \in V \setminus \{a\}$, 
let $e_v$ be the minimum-cost edge incident to $v$, and in the case of edge-costs 
define node-costs $c_v=\min_{e \in \delta_F(v)} c_e$ if $\delta_F(v) \neq \emptyset$,
and $c_v=\infty$ otherwise.
\item
Let ${\cal C}$ be the family of the inclusion members of ${\cal F}$.
With node-costs $\{c_v:v \in V\}$, compute a transversal $U$  
of ${\cal C}$ of cost $c(U) \leq H(\Delta({\cal C})) t^*({\cal C})$.
\item
Return $I=\{e_v: v\in U\}$.
\end{enumerate}
\end{minipage}
}
\end{center}

The solution computed is feasible by Lemma~\ref{l:feasible}.
The approximation ratio follows from Theorem~\ref{t:2} and Lemma~\ref{l:ratio}. \\


\section{Proof of Theorem~\ref{t:hard}} \label{s:hard}

Note that in the reduction in Observation~\ref{o:reduction} we have the following.
\begin{itemize}
\item
Uniform demands $d_v=k$ for all $v \in V$ in {\sf Survivable SL} correspond to
requirements $r_{sv}=k$ for all $v \in V \setminus \{s\}$ in {\sf Survivable {\cna}}.
\item
{\sf $\kappa'$-SL} with $p \equiv 1$ corresponds to {\sf Survivable {\cna}} with edge costs.
\item
Unit node-costs in {\sf Survivable SL} correspond to unit node-costs in {\sf Survivable {\cna}}.
\end{itemize}

Directed {\sf Rooted Survivable NA} with edge-costs and requirements 
$r_{sv}=k$ for all $v \in V \setminus \{s\}$ can be solved in polynomial time \cite{FT};
this implies that also {\em undirected} {\sf Survivable {\cna}} with edge-costs 
and requirements $r_{sv}=k$ for all $v \in V \setminus \{s\}$ can be solved in polynomial time.
Thus the same holds for {\sf $\kappa'$-SL} with $p \equiv 1$ and uniform demands.

Frank \cite{Frank} showed that {\em directed} 
{\sf Survivable {\cna}} with $\delta_G(s)=\emptyset$ and $k=1$ is NP-hard.
Using a slight modification of his reduction we can show that the problem is in fact 
{\sf Set-Cover} hard to approximate, and thus is $\Omega(\log n)$-hard to approximate. 
Given an instance of {\sf Set-Cover}, where a family $A$ of sets needs to cover a set $B$ of elements,
construct the corresponding directed bipartite graph  $G'=(A \cup B,E')$,
by putting an edge from every set to each element it contains.
The graph $G=(V,E)$ is obtained from $G'$ by adding $M$ copies of $B$, 
connecting $A$ to each copy in the same way as to $B$, and adding a new node $s$. 
Let $F=\{sv:v \in V\}$, $c(e)=1$ for every $e \in F$, and $r_{sv}=0$ if $v \in A$ and $r_{sv}=1$ otherwise. 
It is easy to see that if $I \subseteq F$ is a feasible solution to the obtained 
{\sf Survivable {\cna}} instance, then either $I$ corresponds to a feasible solution to
the {\sf Set-Cover} instance, or $|I| \geq M$.
The $\Omega(\log n)$-hardness follows for $M$ large enough, say $|M|={(|A|+|B|)}^2$, and $|A|=|B|$.
Since for $k=1$ all connectivity functions of {\sf Survivable NA} are equivalent,
we get $\Omega(\log n)$ hardness for directed {\sf Survivable NA} with $k=1$ and unit costs. 

\section{Proof of Theorem~\ref{t:fc-bounds}} \label{s:fc-bounds}

{\sf Survivable SL with Flow-Cost Bounds} is a special case of the following generalization
of the {\sf Submodular Cover} problem, where we have two progress functions:
\begin{equation} \label{e:fg}
f(S)= \sum_{v \in V}\min\{\lambda_G^{p,q}(S,v),d_v\}  \ \mbox{ and } \ 
  g(S)= \sum_{v \in V}\min\{-\mu_G^{p,q}(S,v),-b_v\} .
\end{equation}
It is easy to see that $S$ is a feasible solution to {\sf Submodular SL with Flow-Cost Bounds} if and only if 
both 
$$f(S)=f(V)=\sum_{v \in V} d_v \ \mbox{ and } \ 
  g(S)=g(V)=-\sum_{v \in V} b_v \ .$$
For $f,g$ defined by (\ref{e:fg}) we have 
$\max_{u \in U} f(\{u\})-f(\emptyset) \leq d(V)$,
but note that $\max_{u \in U} g(\{u\})-g(\emptyset)=\infty$ may hold.
The function $f$ is submodular since for any $v \in V$ the function $f_v(S)=\lambda^{p,q}_G(S,v)$
is submodular, as can be deduced from Lemma~\ref{l:star} and Theorem~\ref{t:sur}.
The function $g$ is submodular since for any $v \in V$ the function $g_v(S)=\lambda^{p,q}_G(S,v)$
is submodular; this is proved in \cite{BKP} for the case of edge-connectivity, and the proof
for $(p,q)$-connectivity is similar. Also, both functions are non-decreasing and 
admit a polynomial time value oracle.

\begin{center} 
\fbox{
\begin{minipage}{0.960\textwidth}
\noindent
{\sf Double Submodular Cover} \\
{\em Instance:}  
A groundset $V$ with costs $\{c_v:v \in V\}$ and submodular non-decreasing functions 
$f:2^V \rightarrow \mathbb{Z}$ and $g:2^V \rightarrow \mathbb{Z} \cup \{-\infty\}$ 
given by a value oracle. \\
{\em Objective:} 
Find $S \subseteq V$ of minimum cost with $f(S)=f(V)$ and $g(S)=g(V)$.
\end{minipage}
}
\end{center}

There are several natural approaches to solve the {\sf Double Submodular Cover} problem 
using the greedy algorithm of Wolsey \cite{W}.
One is to apply the greedy algorithm with the function $f+g$.
Another possibility is to solve two instances of {\sf Submodular Cover},
one with function $f$ and the other with function $g$, returning the union of the 
solutions $S_f$ and $S_g$ computed. 
However, in both cases the ratio may be unbounded if $g(\emptyset)=-\infty$, 
which may happen for $g$ defined by (\ref{e:fg}).

The idea is to compute $S_f$ and then to compute $S_g$ for the residual problem.
Note that for $f,g$ defined by (\ref{e:fg}) we have the following property:
if $f(S_f)=f(U)$ then $g(S) \geq - n \cdot c(E)$ for any $S \supseteq S_f$. 
Therefore, the following approach works. 
We take the set $S_f$ into our solution, and consider the residual {\sf Submodular Cover} problem 
with groundset $V \setminus S_f$ and the set function $h(S)=g(S_f \cup S)$, $S \subseteq V \setminus S_f$.
The function $h$ is submodular if $g$ is.
Note that for $g$ defined by (\ref{e:fg}),
$\max_{u \in U} h(\{u\})-h(\emptyset) \leq n \cdot c(E)-b(V)$, 
and we get approximation ratio 
$$H\left(\max_{v \in V} f(\{v\})-f(\emptyset)\right)+H\left(\max_{v \in V} h(\{v\})-h(\emptyset)\right)
\leq H(d(V))+H(nc(E)-b(V)) \ .$$

Clearly, the approach described can be generalized to the case when we have 
many non-decreasing submodular functions, under the assumption 
that there exists an ordering $f_1,f_2, \ldots$ of the functions  
such that for any $i$, if $f_j(S)=f(U)$ for every $j \leq i$, 
then $f_{j+1}(S') \neq -\infty$ for any $S' \supseteq S$. 

\vspace{0.3cm}

\noindent
{\bf Acknowledgment} The second author thank Takuro Fukunaga and an anonymous referee for many useful comments.


\end{document}